\newtheorem{theorem}{Theorem}[section]
\newtheorem{corollary}{Corollary}[section]
\newtheorem{remark}{Remark}[section]
\newtheorem{definition}{Definition}[section]
\newcommand\numberthis{\addtocounter{equation}{1}\tag{\theequation}}
\newcommand*\colvec[1]{
        \global\colveccount#1
        \begin{pmatrix}
        \colvecnext
}
\def\colvecnext#1{
        #1
        \global\advance\colveccount-1
        \ifnum\colveccount>0
                \\
                \expandafter\colvecnext
        \else
                \end{pmatrix}
        \fi
}
\author{Jan H. \textsc{Kirchner}} 
\keywords{} 
\begin{document}

\frontmatter 

\pagestyle{plain} 


\begin{titlepage}
\begin{center}

\textsc{\LARGE \includegraphics[scale=0.15]{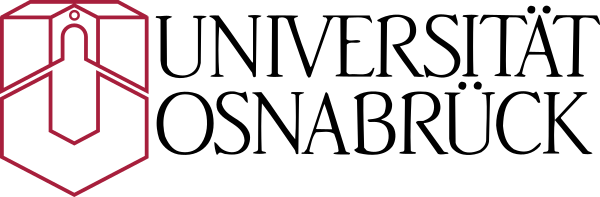}}\\[1.5cm] 
\textsc{\Large Bachelor Thesis}\\[0.5cm] 

\HRule \\[0.4cm] 
{\huge \bfseries \ttitle}\\[0.4cm] 
\HRule \\[1.5cm] 
 
\begin{minipage}{0.4\textwidth}
\begin{flushleft} \large
\emph{Author:}\\{\authorname} 
\end{flushleft}
\end{minipage}
\begin{minipage}{0.4\textwidth}
\begin{flushright} \large
\emph{Supervisors:} \\{\supname} 
\end{flushright}
\end{minipage}\\[3cm]
 
\large \textit{A thesis submitted in fulfillment of the requirements\\ for the degree of \degreename}\\[0.3cm] 
\textit{in the}\\[0.4cm]
\groupname\\\deptname\\[2cm] 
 
{\large \today}\\[4cm] 

\vfill
\end{center}
\end{titlepage}


\vspace*{0.2\textheight}

\noindent\enquote{\itshape So I got a great reputation for doing integrals, only because my box of tools was different from everybody else's, and they had tried all their tools on it before giving the problem to me.}\bigbreak

\hfill Richard P. Feynman, Surely You're Joking Mr. Feynman, p. 87


\begin{abstract}
\addchaptertocentry{\abstractname} 

To investigate the complex dynamics of a biological neuron that is subject to small random perturbations we can use stochastic neuron models. While many techniques have already been developed to study properties of such models, especially the analysis of the (expected) first-passage time or (E)FPT remains difficult. In this thesis I apply the large deviation theory (LDT), which is already well-established in physics and finance, to the problem of determining the EFPT of the mean-reverting Ornstein-Uhlenbeck (OU) process. The OU process instantiates the Stochastic Leaky Integrate and Fire model and thus serves as an example of a biologically inspired mathematical neuron model. I derive several classical results using much simpler mathematics than the original publications from neuroscience and I provide a few conceivable interpretations and perspectives on these derivations. Using these results I explore some possible applications for parameter estimation and I provide an additional mathematical justification for using a Poisson process as a small-noise approximation of the full model. Finally I perform several simulations to verify these results and to reveal systematic biases of this estimator.

\end{abstract}






\tableofcontents 

\mainmatter 

\pagestyle{thesis} 



\chapter{Introduction} 

\label{Chapter1} 

``[T]he staggering complexity of even 'simple' nervous systems'' as described by \textcite{koch1999complexity} has been the subject of many research endeavors and a number of significant advances have been made. However, numerous important and very fundamental questions remain unanswered\footnote{See for example the current survey by \textcite{adolphs2015unsolved}.}. The brain is by no means a \textit{simple} system and its 100 billion neurons and hundreds of trillions of inter-neuronal connections only multiply the complexity of simple systems. Before we can understand the mechanisms of the brain as a whole, we require a better understanding of how individual \textit{neurons} process information. After understanding these constituents we might be able to understand their interactions in large systems. Mathematical \textit{models} and computer simulations have proven  to be useful tools for achieving this goal, see the book by \textcite{lytton2007computer} for an introduction to computational methods in neuroscience. By describing a biological phenomenon in the language of mathematics and then investigating the resulting model we can, if we assume that the model is expressive enough, make inferences about the original biological phenomenon. These ideas have been explored extensively, the book of \textcite{cox2006principles} provides a comprehensive survey of the related concepts. This is the approach pursued by \textit{computational neuroscience} and one of its goals is to find more useful and more expressive models of the biological reality. Mostly this reduces to finding models of the brain's main computational units: neurons \footnote{This narrow focus on neurons might be misguided. Recent empirical findings like the ones described by \textcite{fields2002new} highlight the importance of glia cells for information processing in the brain.}.

Many models of neurons have been proposed\footnote{For an overview see for example \textcite{trappenberg2009fundamentals} or again the book by \textcite{cox2006principles}.}, ranging from complete bio-physical compartment models to simple threshold point neurons. Depending on the type of model different mathematical tools are appropriate. In this thesis I propose the transfer of a tool called \textit{large deviation theory} (LDT), which is commonly used in physics and finance, to neuroscience. This theory is concerned with the exponential decay of the probability of observing stochastic deviations from a specified \textit{expected} region. Theoretically this tool is applicable to all neuron models which incorporate an additive stochastic component. It allows to analyze the small-noise probability, i.e. the probability in the situation where the influence of the stochastic component of the neuron is small, of transitioning from one state of the model to another. This probability is very interesting for making qualitative statements about the behavior of a neuron but also for the explicit simulation of a neuron, see \textcite{sacerdote2013stochastic}.

Exploring large deviation theory in the context of neuroscience can thus be interesting from two perspectives: A theoretician might apply it to gain further insights into how the quantities of interest in a neuron are connected. For the example of the \textit{Stochastic Leaky Integrate and Fire neuron model}, that I will explore more in the rest of this work, it turns out that the large deviation theory provides a perspective on how the stochastic differential equation is connected to the transition density and the first passage time via a so called \textit{rate function}. The rate function in this particular context measures squared deviation from a deterministic path. It is central to large deviation theory, see \textcite{touchette2009large}, and its usefulness in the derivation of the above quantities might hint at a similarly important interpretation in the context of simple neuron models.

The other perspective on the LDT in this context is that of an engineer who tries to incorporate a neuron model into an application like an artificial neural network. With data being available in great amounts the problem of extracting useful information from this data becomes increasingly complicated. Artificial neural networks have impressively demonstrated their usefulness for extracting structured representations from oftentimes very unstructured data, see the review paper on \textit{representation learning} by \textcite{bengio2013representation} for more details. The neuron models employed in these artificial neural networks are however often vast simplifications of a much more complicated biological reality. A constant endeavor in this field is to improve performance of such networks by implementing more sophisticated models of neurons. For more details on these \textit{spiking neuron networks} that incorporate more expressive neuron models into an artificial neural network see for example the paper by \textcite{paugam2012computing}. In this situation computational complexity becomes critical since the considered datasets are big and a small increase in computational complexity of the model size can result in a large increase in overall complexity. Small-noise approximations of the large-deviation type can again be useful here since the perturbations influencing a neuron are usually small but non-negligable and the resulting asymptotics are oftentimes much less complicated than the overall model. This approach of determining asymptotic approximations for otherwise unsolvable or unfeasible problems is called \textit{asymptotic analysis} and is explored for example in the book by \textcite{murray2012asymptotic}.

In this bachelor thesis I first provide an introduction to stochastic processes, stochastic differential equations and integrate and fire-type neuron models in chapter \ref{Chapter2} and \ref{Chapter3}. In chapter \ref{Chapter4} I provide an example application of large deviation theory to a problem from finance and I summarize some important results from LDT concerning stochastic processes defined through stochastic differential equations. In chapter \ref{Chapter5} I use the large deviation theory to investigate one of the central invariants of the stochastic leaky integrate and fire model: the expected firing time. I provide two consistent derivations for asymptotic estimators of this invariant and I explore a few possible applications of these approximations. In chapter \ref{Chapter6} I present a number of simulations I generated in \texttt{R} to test these results.
\section{Previous Work}
There is a large amount of literature available on mathematical models of neurons, see for example the classical textbook by \textcite{dayan2001theoretical}. In this work I will focus on the stochastic leaky integrate and fire model which was introduced as early as \citeyear{stein1965theoretical} by \textcite{stein1965theoretical}. A comprehensive review paper on this type of models has been published by \textcite{sacerdote2013stochastic} in which many of the classical mathematical results are summarized. Many important results on the \textit{first passage time} (FPT) and the \textit{expected first passage time} (EPFT) have been derived by \textcite{ricciardi1979ornstein}. Few people have so far actually estimated the parameters of this model, the notable exception is \textcite{lansky2006parameters}.\\
Large Deviation theory is well-established in physics and finance and besides the classical references by \textcite{richard1995overview} or \textcite{dembo2009large} there is a recent and comprehensive review paper by \textcite{touchette2009large} which also contains a rough outline of the derivation of the stationary distribution\footnote{In the \textit{Large deviations in nonequilibrium statistical mechanics} section (Example 6.3).} of an OU process that is consistent with my derivation in \ref{transition}. Within the field of neuroscience I have only found two publications applying large deviation results to neuron models. \textcite{goychuk2002ion} uses large deviation theory to analyze the opening rate of ion channels in the membrane with Kramers' law \ref{kramerslaw}. \textcite{kuehn2014large} examine stochastic neural fields with large deviation theory and discover that there are some substantial problems which arise in this setup, while also proposing possible solutions.

\chapter{Stochastic Processes \& Stochastic Differential Equations} 

\label{Chapter2} 


\newcommand{\keyword}[1]{\textbf{#1}}
\newcommand{\tabhead}[1]{\textbf{#1}}
\newcommand{\code}[1]{\texttt{#1}}
\newcommand{\file}[1]{\texttt{\bfseries#1}}
\newcommand{\option}[1]{\texttt{\itshape#1}}

Many interesting phenomena are to a certain degree inherently random. This randomness might stem from the underlying vagueness of the phenomenon - as for example in quantum physics where the state of a particle can only be expressed as a probability distribution - or from \textit{our} uncertainty about the phenomenon. It is for example reasonable to assume that in finance the development of the stock market is \textit{in fact} completely deterministic. However, our incomplete knowledge about all the relevant factors influencing the stock prices makes it impossible to define a wholly satisfying deterministic model, see the work by \textcite{sato1998dynamic}. Similarly a neuron's  membrane potential is the result of a big variety of interacting and inherently random cell processes. Explicit modeling of all details is again impossible due to the overwhelming quantity and complexity of relevant factors. These phenomena require the theory of probability to make adequate mathematical modeling possible. The above mentioned examples from physics and finance belong to a particular class of random phenomena that I will focus on: \textit{stochastic processes}. My exposition of these processes is going to be guided by \textcite{knill1994probability}.

\section{Stochastic Processes}
A stochastic process is a (potentially uncountably infinite) collection of random variables $\{X_{\mathrm{t}}\}$ indexed by $\mathrm{t}\in I$ on a common probability space $(\Omega,\mathcal{E},\mathbb{P})$. It is called \textit{time continuous} if $I = \mathbb{R}$ and \textit{state continuous} if $X_{\mathrm{t}} \in \mathbb{R}^n$. In this work I am going to consider stochastic processes which are time \textit{and} state continuous since these are also the processes most suited for representing biological cell processes.
A useful way to think of a stochastic process is to think of the family $\{X_{\mathrm{t}}\}$ of random variables as a family of \textit{random trajectories} indexed by $\omega\in\Omega$. For a given $\omega$ we can interpret $X_{\mathrm{t}}(\omega)$ as a sample path, i.e. a function of $\mathrm{t}$ in $\mathbb{R}^n$.

\begin{figure}
\centering
\includegraphics[scale=0.4]{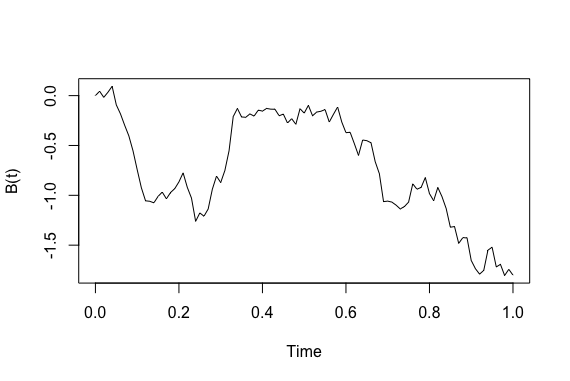}
\caption{A sample path of Brownian motion generated with \texttt{R}}
\label{BMR}
\end{figure}
Since we did not impose any constraint on the individual random variables $X_t$ this definition does not immediately provide any useful examples. It turns out however that after constructing one particular stochastic process called \textit{Brownian motion} (also known as a Wiener Process) we can characterize many interesting stochastic processes by expressing them in relation to Brownian motion. This particularly important example of a stochastic process can be characterized by the following four conditions: 
\begin{definition}\label{BMDef}
A stochastic process $B_t$ is called Brownian motion if
\begin{enumerate}
	\item $B_0(\omega) = 0, \, \forall \omega \in \Omega$
    \item $\mathbb{P}(\{\omega | B_t(\omega) \textrm{ is not continuous}\}) = 0$
    \item $t_1,\dots,t_n \in I \Rightarrow B_1 , B_2-B_1 , \dots , B_n - B_{n-1}$ are independent
    \item $0 \leq s \leq t \Rightarrow B_t - B_s \sim \mathcal{N}(0 , t-s)$
\end{enumerate}
\end{definition}
To prove that such a process exists and to achieve an intuition for its properties we can construct it as the scaling limit of a random walk. Define $X_0 = 0, X_{\mathrm{t}} = X_{{\mathrm{t}}-1} + \varepsilon_{\mathrm{t}} = \sum_{t=0}^n \varepsilon_t $ where $\varepsilon_{\mathrm{t}}$ is a Gaussian white noise process, i.e. independent, identically distributed centered random Gaussian variables. If we now set $B_t(n) := \frac{X_{\lfloor nt\rfloor}}{\sqrt[]{n}}$ then Donsker's theorem\footnote{As a special case of Donsker's invariance principle by \textcite{donsker1951invariance}.} guarantees that $\lim_{n\to\infty}B_t(n)$ exists and has properties $1-4$ of \ref{BMDef}. A sample path of a Brownian motion can be seen in \ref{BMR}. 


\section{Stochastic Differential Equations}\label{sdeees}
After constructing Brownian Motion we are able to introduce stochastic differential equations (SDEs) which characterize a certain class of useful stochastic processes called \textit{Itô processes} as solutions to SDEs. The rigorous mathematical definition of these equations is too technical for this work so I am going to treat them only on a heuristic level. For a complete yet accessible account of the mathematical details see the classical reference for SDEs by \textcite{oksendal2003stochastic}. The basic aim is to make precise what equations of the form 
\begin{equation}
\mathrm{d} X_{\mathrm{t}} = \mu(X_{\mathrm{t}},\mathrm{t})\, \mathrm{d} \mathrm{t} +  \sigma(X_{\mathrm{t}},{\mathrm{t}})\, \mathrm{d} B_{\mathrm{t}}
\end{equation}
or equivalently written as an Itô integral
\begin{equation}
X_\mathrm{t} =\int\mu(X_{\mathrm{t}},\mathrm{t})\, \mathrm{d} \mathrm{t} +  \int\sigma(X_{\mathrm{t}},{\mathrm{t}})\, \mathrm{d} B_{\mathrm{t}}
\end{equation}
(where $B_{\mathrm{t}}$ is a Brownian Motion) mean and, if applicable, which random processes $X_t$  satisfies them. It can be shown that given relatively weak assumptions on the smoothness of the drift term $\mu$ and the diffusion term $\sigma$ the solution to such an SDE exists and sometimes can be given explicitly. This solution is then called an Itô diffusion process. Note that for the case where $\sigma = 0$ this SDE turns into a regular differential equation with a deterministic solution. So the diffusion coefficient controls the amount and the nature of the randomness added to the deterministic solution.
One way to visualize an Itô integral of the form
\begin{equation}\label{exampelIntegral}
\int \sigma(B_t,t)dB_t
\end{equation}
is to fix one $\omega\in\Omega$ and to approximate the resulting sample path of the Brownian motion $B_t(\omega)$ as a piecewise constant function on $I = \bigcupdot [t_{i},t_{i+1}[$
\begin{equation}
B_{s_i}(\omega) = c_i,\quad t_i \leq s_i < t_{i+1}
\end{equation}
This access to stochastic integrals is commonly called the Riemann-Stieltjes approach\footnote{See for example \textcite{muldowney2014understanding}.}.
$B_t(\omega)$ can be interpreted as a \textit{measure} on $I$ and then the Lebesgue integral in \ref{exampelIntegral} can be written as the sum over the constant segments
\begin{equation}
\sum_{i=0}^\infty\sigma(c_i,t_i)(c_i - c_{i-1})
\end{equation}
In this representation it also becomes obvious why we require the \textit{increments} of the Brownian motion to be Gaussian. While this is a good representation to understand the basic principle underlying stochastic integration, it is important to note that the transition from a piecewise constant sample path $B_t(\omega)$ to general Brownian Motion contains a number of substantial intricacies that impede a naive approach to stochastic integration. For the method presented in this work however the explicit solution of a SDE is not required and the naive approach should suffice. 

\section{First Exit Time \& First Passage Time}
A very important random variable that can be derived from a given stochastic process is the \textit{First Exit Time} from a given domain $D \subset \mathbb{R}^n$, i.e. the smallest time $t$ such that $X_\mathrm{t}\not\in D$: 
\begin{equation}
T^\varepsilon_D = \inf\{t \geq 0 : X_\varepsilon(t) \not\in D\}
\end{equation}
This quantity is especially interesting in the context of neuron models if the domain $D$ is chosen to be $D = ]-\infty , S]$, i.e. the first crossing of a given threshold S:
\begin{equation}
T^\varepsilon = \inf(t > 0 , X_\varepsilon(t) \geq S > v_0)
\end{equation}
The superscript $\varepsilon$ is the only parameter explicitly noted since I am going to examine these variables in the small noise limit, i.e. for small values of $\varepsilon$, and only afterwards I will analyze the resulting expression with respect to the remaining parameters. I will return to this later and remark here only that there is already a large range of literature available on how to investigate these random variables, see for example the book by \textcite{9780511606014}. 

\chapter{Stochastic Formulation Of Leaky Integrate And Fire} 

\label{Chapter3} 

A model is a simple representation of a complex reality that helps us to understand the complex reality. To create a good model for a given application we need to identify the critical components of a system and the relevant interactions between them. Different applications require different models: If we want to understand all the details of how an action potential of a neuron is created we need an appropriately expressive model like the \textit{HH-Model} by \textcite{hodgkin1952quantitative}. 

If we are only interested in the approximated neuronal dynamics underlying the creation of an action potential, then the HH-model can be too complex to be analyzed efficiently. Especially if we are interested in imitating a neuron's behavior in an artificial neural network then computational complexity becomes critical and a simpler model is desirable. Many such \textit{minimal} models have been proposed, see for example \textcite{fitzhugh1955mathematical} and \textcite{izhikevich2003simple}.

The model that I am going to consider in this thesis is the \textit{Stochastic Version of the Leaky Integrate and Fire Neuron model} since it captures the true dynamics of a simple neuron sufficiently well while still being relatively simple. The underlying assumption of the Integrate and Fire type models is that if we are interested in a neurons spiking behavior we can limit our attention to the neuron's \textit{membrane potential} $V_\mathrm{m}$ since the evocation of an action potential critically depends on the increase of the potential above some appropriately defined \textit{threshold} $V_{th}$. My derivation of this model in the following paragraphs is based on rather heuristic arguments, for a rigorous treatment see \textcite{stevens1998novel} or \textcite{buonocore2010stochastic}.

\section{Integrate And Fire} 

Possibly the simplest models for the membrane potential of a neuron is the simple \textit{Integrate and Fire Model} which models the cell membrane as a capacitor with capacitance $C_{{\mathrm  {m}}}$ whose membrane potential $V_{\mathrm  {m}}$ can be described with respect to the input current $I(t)$ by 
\begin{equation} \label{IFmodel}
{\frac  {dV_{{\mathrm  {m}}}(t)}{dt}} =\frac{1}{C_{{\mathrm  {m}}}}I(t) , \quad V_m(0) = v_0.
\end{equation}
A constant input thus results in a linear increase of the membrane potential. Once the potential increases beyond $V_{th}$ a spike of a prefixed size $V_\mathrm{sp}$ is generated and the membrane potential is reset to its resting potential $v_r < V_\mathrm{th} < V_\mathrm{sp}$.

This model has the obvious flaw that if a temporary current is injected that does not increase the potential above the threshold then the potential stays at that value indefinitely. But in a biological neuron we would expect a decay of the potential back to its \textit{resting potential}. Therefore it is biologically more plausible to introduce a leak mechanism as described in the following section.
\section{Leaky Integrate And Fire}
In this model we assume that there is a leak current, i.e. the membrane potential decreases exponentially back to the resting potential $v_r = 0$ if the input current is zero. The conditions for a spike remain unchanged to the ones in the simple Integrate and Fire model. This relation is captured by subtracting a leak term from \ref{IFmodel} that reverts the membrane potential back to 0 in the absence of an input.
\begin{equation}\label{LIFmodel}
{\frac  {dV_{{\mathrm  {m}}}(t)}{dt}}=\frac{1}{C_{{\mathrm  {m}}}}\left(I(t)-{\frac  {V_{{\mathrm  {m}}}(t)}{R_{{\mathrm  {m}}}}}\right) , \quad V_m(0) = v_0
\end{equation}
Here $R_\mathrm{m}$ is the resistance of the membrane that controls the speed of the leaky decay. The use of Leaky Integrate and Fire (LIF) neurons instead of simple sigmoidal activation functions\footnote{Sigmoidal activation functions are of the type $\sigma(x)=\frac{1}{1 + \exp(x)}$.} can significantly improve the performance of an artificial neural network, see for example the work of \textcite{lukovsevivcius2009reservoir}. 

While this model captures many important aspects of a neuron's spiking behavior it fails to represent one important aspect: the spontaneous spiking behavior of a neuron even if the injected input remains in a sub-threshold domain. When a neuron in the resting state does not receive any input it might nonetheless eventually reach its threshold through accumulation of stochastic effects, see \textcite{alving1968spontaneous} or \textcite{hausser2004beat}. Representing this behavior in a mathematical model requires the reinterpretation of the membrane potential as a stochastic process. 
\section{Stochastic Formulation}\label{chap3}
\begin{figure}
\centering
\includegraphics[scale=0.4]{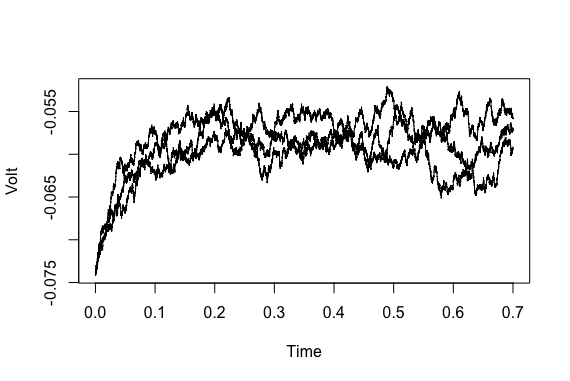}
\caption{Three sample paths from the Stochastic Leaky Integrate and Fire Model with \texttt{R} using parameter values estimated by \textcite{lansky2006parameters}: $\beta=25.8042 [1/s], v_0 = -73.92 [mV], \mu = 0.341 [V/s], \varepsilon= 0.0114[V/\sqrt[]{s}]$}
\label{OUsimu}
\end{figure}
If the additive stochastic component is assumed to be Gaussian\footnote{This is a reasonable assumption when we consider that by the central limit theorem the sum of many independent additive stochastic effects converge to a normal distribution.} we can use the methods described in section \ref{sdeees} to formulate a stochastic version of the LIF-model.
By multiplying \ref{LIFmodel} with $\mathrm{d}t$ and adding a noise term $\varepsilon\mathrm{d}W_t , \varepsilon\in\mathbb{R}_{> 0}$ to this equation we arrive at
\begin{align}
{\mathrm{d}V_{{\mathrm  {m}}}(t)}&=\frac{1}{C_{{\mathrm  {m}}}}\left(I(t)-{\frac  {V_{{\mathrm  {m}}}(t)}{R_{{\mathrm  {m}}}}}\right)\mathrm{d}t + \varepsilon \mathrm{d}W_t & \quad V_m(0) = v_0
\end{align}
By setting $\theta = \frac{1}{C_\mathrm{m}R_\mathrm{m}}$ and $\mu(t) = I(t)R_\mathrm{m}$ and by assuming $\mu(t)=\mu$ constant we arrive at the classical SDE which characterizes a centered Ornstein-Uhlenbeck process (OU process)
\begin{equation}\label{simpleform}
{\mathrm{d}V_{{\mathrm  {m}}}(t)} =-\theta\left(V_{{\mathrm  {m}}}(t) - \mu\right)\mathrm{d}t + \varepsilon \mathrm{d}W_t  \quad V_m(0) = v_0
\end{equation}
which I will use for my mathematical analysis. Alternatively\footnote{As in the paper of \textcite{ricciardi1979ornstein}.} this process can be characterized as the only \textit{stationary} and \textit{Markovian} process whose finite distributions are \textit{Multivariate Gaussians} with $\mathbb{E}(V_t) = \mu(1-\exp(-\theta t))$ and $\mathbb{V}(V_t) = \frac{\varepsilon^2}{2\theta}(1-\exp(-2\theta t))$ for a fixed t. In their work Riccardi and Sacerdote also argue for the biological plausibility of this model by deriving it from very intuitive assumptions about the biological mechanisms governing neurons and modeling dendritic input as Dirac deltas\footnote{A Dirac delta $\delta_a$ is not a function in the classical sense but instead a generalized distribution. Interpreted as a measure it becomes the Dirac point measure with $\mathbb{P}(\delta_a \in A) = \mathbbm{1}_A(a)$, i.e. 1 if and only if $a \in A$ and otherwise 0. It can be constructed as the limit of a Gaussian $\mathcal{N}(a,\sigma)$ as $\sigma \to 0$.} $I\times\delta_a(t)$.

If we instead follow the approach of \textcite{lansky2006parameters} we can also set $\mu = \mu(t)=\frac{I(t)}{C_m}$ and $\beta = \frac{1}{C_{\mathrm{m}}R_{\mathrm{m}}}$ and shift the process along the y-axis by setting $V_\mathrm{m}(t) := V_\mathrm{m}(t) - v_r$
\begin{align}\label{complicatedformulation}
\mathrm{d}V_m(t) &= \left(-\beta(V_m(t)-v_r) + \mu \right)\mathrm{d}t + \varepsilon \mathrm{d}W(t) & V_m(0) = v_0
\end{align}
to arrive at a version of the OU process with a straightforward biological interpretation: $\beta > 0$ as the spontaneous decay of the membrane potential to the resting potential $v_r$ when no external current is applied, $\mu$ as the drift coefficient which represents the external input, $\varepsilon$ as the amount of inherent stochasticity and as above an additional parameter $V_{th}$ as the threshold for spiking behavior. This version is useful for testing my results for biological plausibility. Note that we can easily transfer results derived from \ref{simpleform} to \ref{complicatedformulation} by setting $\beta := \theta$ and $\mu := v_r + \frac{\mu}{\beta}$. Three sample paths of \ref{complicatedformulation} with realistic parameter values can be seen in Figure \ref{OUsimu}.

There is some evidence that the membrane potential of a neuron can be modeled as an OU process but of course the model also has a number of weaknesses. Because of its simplicity many biological mechanisms are not represented, such as the refractory period after the occurrence of a spike or the fact that the criterion for the evocation of an action potential in the real neuron does not correspond to a simple threshold but rather to a separatrix in a state space, for more details see the book by \textcite{moehlis2008dynamical}. Also the additive noise is assumed to be completely state independent which is also not very realistic. These factors might be among the reasons why the OU model of neuronal activation failed to reproduce the spiking statistics of neurons in the prefrontal cortex of a monkey as was demonstrated by \textcite{shinomoto1999ornstein}. But despite all of its shortcomings the model remains one of the most thoroughly analyzed models in computational neuroscience.

\chapter{Large Deviation Theory} 

\label{Chapter4} 


\section{Large Deviation Theory}
Large Deviation Theory (LDT) is concerned with the probability that a given random variable $X$ assumes values in a \textit{critical range}, usually values far away from the expected value of the random variable. The theory has originally been developed by the Swedish mathematician \textcite{Cramer1936} in the 1930s but many important results have since then been achieved, especially by \textcite{donsker1975asymptotic} and by \textcite{Freidlin1984}. It has since then found many interesting applications, although mainly limited to the fields of financial statistics and physics.

Most results in LDT might be considered as an extension of Cramér's \citeyear{Cramer1936} theorem which I will state here to establish the notation and the vocabulary. My exhibition will closely follow \textcite{lewis1997introduction}.
Let $I \subset \mathbb{R}$ be an interval, and $f : I \to \mathbb{R}$ a convex function; then its Legendre transform is the convex function $f^* : I^* \to \mathbb{R}$ defined by
$${\displaystyle f^{*}(x^{*})=\sup _{x\in I}(x^{*}x-f(x)),\quad x^{*}\in I^{*}} $$ The Legendre-transform can be visualized as associating to each point $f(x)$ on the graph of $f$ the negative value of the y-intercept of the tangent at $f(x)$.
For a random variable $X$ the function $M_X : I \to \mathbb{R} , t \mapsto \mathbb{E}(e^{tX})$ is called the moment-generating function of $X$ at t\footnote{Given that the integral is defined on a symmetric interval around zero.}. The name comes from the Taylor expansion around 0, $M_X(t)=\mathbb{E}(e^{tX}) = 1 + t\mathbb{E}(X) + \frac{t^2\mathbb{E}(X2)}{2!}+\dots$ from which it follows that $\mathbb{E}(X^i)=M_X^{(i)}(0)$. By taking the logarithm of $M_X(t)$ we get the cumulant generating function $C_X(t) = \log(M_X(t))$.
We can consider the Legendre transform of the cumulent generating function of $X$ and then Cramérs theorem can be stated as follows:

\begin{theorem}[Cramér's Theorem] 
Given a sequence of i.i.d. real valued random variables $X_i, i \geq 1$ with a common cumulant generating function $C_X(x) = \log\mathbb{E}[\exp(x X_1)]$ and $I(x^*) = \sup_{x\in I}(x^*x - C_X(x))$ as the Legendre transform of $C_X(x)$. Then for the empirical mean $S_n = \frac{1}{n}\sum_{i=1}^nX_i$ the following inequations hold:
\begin{samepage}
\begin{itemize}
	\item For any closed set $F \subseteq \mathbb{R}$, 
    \[\limsup_{n\to\infty} \frac{1}{n}\log \mathbb{P}(S_n \in F) \leq - \inf_{x^*\in F} I(x^*)\]
    \item For any open set $U \subseteq \mathbb{R}$, 
    \[\liminf_{n\to\infty} \frac{1}{n}\log \mathbb{P}(S_n \in U) \geq - \inf_{x^*\in U} I(x^*)\]
\end{itemize}
\end{samepage}
\end{theorem}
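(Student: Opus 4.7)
The plan is to prove the upper and lower bounds separately, since they require quite different techniques. The upper bound follows from a Chernoff-type exponential Markov inequality, while the lower bound uses a change of measure (exponential tilting) argument together with the weak law of large numbers.

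For the upper bound I would first treat sets of the form $F = [a, \infty)$ with $a > \mathbb{E}[X_1]$. Applying Markov's inequality to $e^{n\lambda S_n}$ for $\lambda > 0$ gives
\[\mathbb{P}(S_n \geq a) \leq e^{-n\lambda a}\,\mathbb{E}\bigl[e^{\lambda \sum_i X_i}\bigr] = e^{-n(\lambda a - C_X(\lambda))}.\]
Taking logarithms, dividing by $n$, and optimizing over $\lambda \geq 0$ produces $\tfrac{1}{n}\log \mathbb{P}(S_n \geq a) \leq -I(a)$; an analogous bound holds for $(-\infty, a]$ when $a < \mathbb{E}[X_1]$. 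For a general closed $F$ I would decompose along the two half-lines bordering the point(s) of $F$ closest to $\mathbb{E}[X_1]$ and invoke convexity of $I$ together with its monotonicity on each side of its minimum (which is attained at $\mathbb{E}[X_1]$), both inherited from its being a Legendre transform, to identify $\inf_{x^* \in F} I(x^*)$.

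For the lower bound I would use exponential tilting. Fix $a \in U$ with $I(a) < \infty$, and choose $\lambda^*$ satisfying $C_X'(\lambda^*) = a$, so that $I(a) = \lambda^* a - C_X(\lambda^*)$. Define a tilted probability measure $\tilde{\mathbb{P}}_{\lambda^*}$ by
\[\frac{d\tilde{\mathbb{P}}_{\lambda^*}}{d\mathbb{P}}(x) = \frac{e^{\lambda^* x}}{M_X(\lambda^*)}.\]
Under $\tilde{\mathbb{P}}_{\lambda^*}$ the $X_i$ are still i.i.d., but now with mean $C_X'(\lambda^*) = a$. Pick a neighborhood $(a - \delta, a + \delta) \subseteq U$ and reverse the tilt:
\[\mathbb{P}(S_n \in U) \geq \mathbb{P}\bigl(|S_n - a| < \delta\bigr) = M_X(\lambda^*)^n \int_{|S_n - a| < \delta} e^{-n \lambda^* S_n} \, d\tilde{\mathbb{P}}_{\lambda^*}.\]
On the integration region the exponent satisfies $-n\lambda^* S_n \geq -n\lambda^*(a + \delta)$ (with the roles of $\pm\delta$ swapped if $\lambda^* < 0$), and by the weak law of large numbers under $\tilde{\mathbb{P}}_{\lambda^*}$ the probability of $\{|S_n - a| < \delta\}$ tends to $1$. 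Taking logarithms, dividing by $n$, and then sending $\delta \to 0$ yields $\liminf_n \tfrac{1}{n} \log \mathbb{P}(S_n \in U) \geq C_X(\lambda^*) - \lambda^* a = -I(a)$, and optimizing over $a \in U$ gives the claim.

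The hard part will be the lower bound, for two reasons. First, the change-of-measure step requires $\lambda^*$ to lie in the interior of the effective domain $\{C_X < \infty\}$; this in turn demands that $C_X$ be essentially smooth (or that the Gärtner-Ellis style hypothesis hold), since otherwise the equation $C_X'(\lambda) = a$ need not admit a solution for $a$ near the edge of the domain of $I$. Second, one must handle points $a \in U$ with $I(a) = \infty$ (trivial) and points where $U$ only meets the effective domain at its boundary; both are handled by approximating such $a$ from inside the effective domain and exploiting the convexity and lower semicontinuity of $I$. The upper bound, by contrast, is essentially mechanical once the reduction to half-lines is in place.
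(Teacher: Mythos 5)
The paper does not actually prove this theorem: it is quoted as a classical result (following \textcite{lewis1997introduction}, with the matching of upper and lower bounds deferred to \textcite{richard1995overview}), so there is no in-paper argument to compare yours against. Judged on its own, your proposal is the standard and correct proof strategy — Chernoff/Markov bounding plus optimization over $\lambda$ for the upper bound, and exponential tilting plus the weak law under the tilted measure for the lower bound — and the reduction of a general closed set to the two half-lines nearest the mean via convexity and monotonicity of $I$ on either side of its minimum is the right way to finish the upper bound.

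Two caveats, both of which you partly flag yourself. First, your lower-bound argument as written proves the theorem only under an extra steepness/essential-smoothness hypothesis: for $a$ outside the range of $C_X'$ (or when $0$ is not interior to the effective domain of $C_X$, e.g.\ when $M_X$ is finite only at the origin so that $I \equiv 0$) the equation $C_X'(\lambda^*) = a$ has no solution, yet Cramér's theorem on $\mathbb{R}$ still holds; closing this gap requires the standard additional arguments (approximating $a$ from within the interior of $\{I < \infty\}$ and using convexity, plus a separate treatment of the degenerate case), not merely lower semicontinuity of $I$. Second, the upper-bound reduction implicitly assumes $\mathbb{E}[X_1]$ exists and that $I$ attains its minimum there; this is automatic when $C_X$ is finite near $0$ but should be stated as a hypothesis or handled separately. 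With those points addressed your outline would constitute a complete proof of the statement as the paper intends it.
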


These bounds are \textit{tight}, i.e. they cannot be improved and represent the actual probabilities up to a normalizing factor. This can be illustrated by the following consideration: We can choose L to be an interval $[y,z] = L \subseteq \mathbb{R}$ that might also be open or half-open. Then we can always select $F$ and $U$ as the closure or the interior of the interval, i.e. $F = L^- , U = L^o$ so that the infimum of the closure and the interior over the rate function coincide: $\inf_{x^*\in L^-}I(x^*)=\inf_{x^*\in L^o}I(x^*)$. Under this condition it can also been shown\footnote{See Section 3 of \textcite{richard1995overview}.} that the $\limsup$ and the $\liminf$ coincide and we arrive at:
\begin{equation}
\lim_{n\to\infty} \frac{1}{n}\log \mathbb{P}(S_n \in L) = - \inf_{x\in L} I(x)
\end{equation}
or equivalently for $n >> 0$
\begin{equation}
\mathbb{P}(S_n \in L) \approx \exp\{- n\inf_{x\in L} I(x)\}
\end{equation}
which gives us an explicit expression for the distribution of the mean.\footnote{\textcite{donsker1976asymptotic} have discovered that the Legendre transform of a cumulant generating function evaluated at $a$ equals the minimal \textit{Kullback–Leibler divergence} between the original distribution and a suitably chosen measure dependent on $a$. See Section 4 of  \textcite{fischerlarge} for a precise statement.}

I will apply Cramér's Theorem to a classical situation that has already many similarities with the situation relevant for the current work: minimizing the risk of having costs accumulate above a given income. Consider a secure steady income of $p\in \mathbb{R}$ units and random i.i.d. daily payments described by the random variables $\{X_i\}$. The question is how probable it is that the accumulated costs $\sum_{i=1}^TX_i$ over a period of $T$ days exceed the secure income $pT$, i.e. the probability given by $\mathbb{P}(\sum_{i=1}^TX_i \geq pT) = \mathbb{P}(\frac{1}{T}\sum_{i=1}^TX_i \geq p)$. This expression allows the application of Cramérs Theorem with a suitable rate function $I(x)$ for the random variables $\{X_i\}$
\begin{equation}
\mathbb{P}(\frac{1}{T}\sum_{i=1}^TX_i \geq p) \approx \exp\{- T \inf_{x\in A_p} I(x)\}
\end{equation}
where $A_p = \{x | x \geq p\}$ is a half open interval and $T$ has to be sufficiently large. 
To evaluate this expression we need to make further assumptions on the random variables describing the daily payments. For this example we may assume that they are independent and identically normal distributed with mean $\mu$ and variance $\sigma^2$. Then it is well-known\footnote{One way to derive this is to realize that $M_X(\theta) = \mathcal{F}^{-1}(X)(-i\theta)$, where $\mathcal{F}^{-1}$ denotes the inverse Fourier transform, and using that for $X \sim \mathcal{N}(\mu,\sigma)$ the inverse Fourier transform is given by $\mathcal{F}^{-1}(X)(t) = e^{-\frac{1}{2}\sigma^2t^2-i\mu t}$. Alternatively one can also simply solve the integral $\mathbb{E}(e^{\theta X})=\int_{-\infty}^\infty e^{\theta x}\frac{1}{\sqrt[]{2\pi}\sigma}e^{-\frac{(x-\mu)^2}{2\sigma^2}}\mathrm{d}x$ by completing the square in the exponential.} that the corresponding moment generating function is $M(\theta) = e^{\theta \mu +{\frac {1}{2}}\sigma ^{2}\theta^{2}}$ and the cumulant generating function is therefore $\log M(\theta) = \theta \mu +{\frac {1}{2}}\sigma ^{2}\theta^{2}$. So $I(x)$ can be calculated as
\begin{align}
I(x) &= \sup _{\theta\in I}(x\theta-\log M(\theta)) = \sup _{\theta\in I}\left(x\theta- \left(\theta \mu +{\frac {1}{2}}\sigma ^{2}\theta^{2}\right)\right)\\
&= x\left(\frac{x-y}{\sigma^2}\right) - \mu\left(\frac{x-\mu}{\sigma^2}\right) -\frac{1}{2}\sigma^2 \left(\frac{x-\mu}{\sigma^2}\right)^2 \\ 
&= \frac{1}{2}\left(\frac{x-\mu}{\sigma}\right)^2
\end{align}
since the supremum over $\theta$ is given by $\theta^* = \frac{x-\mu}{\sigma}$ which can be seen by taking the derivative and setting it equal to zero.
If we fix some small probability $\exp\{-r\}$ we can calculate the corresponding value for p so that the asymptotic probability equals this small probability:
\begin{align}
\exp\{-r\} &= \exp\{- T \inf_{x\in A_p} I(x)\} \\
\Leftrightarrow \frac{r}{T} &= \inf_{x\in A_p} I(x) = I(p) = \frac{1}{2}\left(\frac{p-\mu}{\sigma}\right)^2 \\
\Rightarrow p &= \mu + \sigma\sqrt[]{\frac{2r}{T}}
\end{align}
Since the information function is convex with a unique minimum\footnote{This is one of the characteristic features of a rate function that can can be seen immediately in this case.} at $\mu$ it assumes its infimum over $A_p$ at the value from $A_p$ that is closest to $\mu$. Given the reasonable assumption\footnote{The daily income should be higher than the average daily loss, otherwise it's not reasonable to expect that losses are rare.} that $p\geq\mu$ the second step from the calculation follows. The second solution from taking the square-root in the third line is dropped for the same reason. From these calculations it follows that to ensure a small probability of a negative balance the daily income should be bigger than the expected loss by $\sigma\sqrt[]{\frac{2r}{T}}$.

\section{Freidlin–Wentzell theorem}\label{fwt}
We are going to turn to what is often called \textit{level-2 large deviations} as coined by \textcite{book:55349} in \textit{Entropy, Large Deviations, and Statistical Mechanics}. While the calculations from the previous section can be considered \textit{standard}, the following theorem by \textcite{Freidlin1984} has found much less application outside of physics and finance. For previous applications of the theorem in physics see for example \textcite{luchinsky1998analogue} or \textcite{Landa20001} and for a recent application in finance see  \textcite{pham2010large}. One publication from the field of neuroscience that is concerned with Nonlocal Stochastic Neural Fields is \textcite{kuehn2014large}.

Let $X_\varepsilon\in\mathbb{R}^n$ be a stochastic process that satisfies $$dX_\varepsilon(t)=b(X_\varepsilon)\mathrm{d}t+ \sqrt{\varepsilon}\sigma(X_\varepsilon)\mathrm{d}W(t),\quad X_\varepsilon(0) = x_0 , t\in[T_1,T_2]$$ for uniformly Lipschitz\footnote{A function is called Lipschitz (continuous) if there exists a constant $L\in\mathbb{R}$ such that for all $x_1,x_2\in\mathbb{R}^n: ||f(x_1)-f(x_2)|| \leq L|x_1-x_2|$. This implies usual continuity.} drift function $b: \mathbb{R}^n \to \mathbb{R}^n$ and diffusion matrix $\sigma: \mathbb{R}^n\to\mathbb{R}^{n\times n}$. It can be shown that the probability of large deviations from the deterministic solution,
\begin{equation}
\mathbb{P}(\sup_{t}|X_\varepsilon(t) - x(t)| > \delta) , \delta > 0
\end{equation}
i.e. from the function $x(t)$ which satisfies 
\begin{equation}\label{detersys}
	\frac{\mathrm{d}x(t)}{\mathrm{d}t} = b(x(t)), x(0)=x_0 
\end{equation} equals zero in the small noise limit $\varepsilon\to 0$. Nonetheless large deviations due to accumulation of noise \textit{can} still occur even though they become increasingly unlikely. The general result of Freidlin and Wentzell shows that the decay of this probability in the small-noise limit is exponential and they provide an explicit expression for the exponential. It applies to stochastic processes on almost arbitrary probability spaces but for this work I only consider real-valued stochastic processes as formulated above. Applied to this case the theorem can be stated as in \textcite{peithmann2007large} where a full proof of a generalized statement can also be found:
\begin{theorem}[Freidlin-Wentzell theorem]\label{fwtheo}
Let $X_\varepsilon, \varepsilon > 0$ be the family of $\mathbb{R}^n$-valued processes defined by
\begin{equation}\label{statement421}
\mathrm{d}X_\varepsilon(t) = b(X_\varepsilon(t)) \mathrm{d}t +\sqrt[]{\varepsilon}\sigma(X_\varepsilon(t)) \mathrm{d}W(t), X_\varepsilon(0) = x_0 \in \mathbb{R}^n 
\end{equation}
on a fixed time interval $[T_1, T_2]$, where $b$ and $\sigma$ are Lipschitz continuous, and $W$ is
an n-dimensional Brownian motion. Let $C[T_1,T_2]$ be the space of absolutely continuous functions in $\mathbb{R}^n$ with square integrable derivatives defined on $[T_1,T_2]$. If $\sigma$ is invertible and $a = \sigma\sigma^T$ is uniformly positive definite\footnote{A matrix $M\in\mathbb{R}^{n\times n}$ is called positive definite if for all $z\in\mathbb{R}^n, z^TMz>0$. It is called \textit{uniformly} positive definite if $M(x)$ is a function of a vector $x\in\mathbb{R}^n$ and for a given $z\in\mathbb{R}^n$ there exists an $\varepsilon>0$ such that for all $x\in\mathbb{R}^n, z^TM(x)z \geq \varepsilon > 0$.} then
\begin{itemize}
	\item For any closed subset $F \subseteq C[T_1,T_2]$, 
    \[\limsup_{\varepsilon\to 0} \varepsilon\log \mathbf{P}(X_\varepsilon \in F) \leq - \inf_{\chi\in F} J(x)\]
    \item For any open subset $U \subseteq C[T_1,T_2]$, 
    \[\limsup_{\varepsilon\to 0} \varepsilon\log \mathbf{P}(X_\varepsilon \in U) \geq - \inf_{\chi\in U} J(x)\]
\end{itemize}
where the rate function $J[\chi] = \frac{1}{2}\int_{T_1}^{T_2} [\dot{\chi} - b(\chi)]^Ta^{-1}(\chi)[\dot{\chi} - b(\chi)] \mathrm{d}t$ and $\mathbf{P}$ denotes the path density. 

This is commonly written as
\begin{align}\label{gibbsform}
\mathbf{P}_{\varepsilon}[\chi] \asymp \exp\{-\frac{\inf_{\chi} J[\chi]}{\varepsilon}\}
\end{align}
\end{theorem}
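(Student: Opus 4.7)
The plan is to derive the statement from the simpler large deviation principle for rescaled Brownian motion (Schilder's theorem) via the contraction principle, and then to identify the contracted rate function explicitly using the invertibility of $\sigma$. First I would prove Schilder's theorem: the family $\{\sqrt{\varepsilon} W\}_{\varepsilon>0}$ of scaled Wiener paths satisfies an LDP on $C[T_1,T_2]$ with rate function $I[\phi] = \frac{1}{2}\int_{T_1}^{T_2} |\dot{\phi}(t)|^2 \, \mathrm{d}t$ for absolutely continuous $\phi$ with square-integrable derivative (and $+\infty$ otherwise). One clean way to obtain this is to compute the Legendre transform of the log-moment-generating functional of the Gaussian measure on path space and appeal to an infinite-dimensional analogue of Cramér's theorem in the spirit of Gärtner--Ellis.

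Next, I would view $X_\varepsilon$ as the image of $\sqrt{\varepsilon}W$ under the It\^o map $F: \phi \mapsto x$, where $x$ solves the controlled ODE
\[
\dot{x}(t) = b(x(t)) + \sigma(x(t))\dot{\phi}(t), \quad x(0) = x_0.
\]
Existence and uniqueness of $x$ follow from the Lipschitz assumptions on $b$ and $\sigma$ together with Gronwall's inequality. The central analytic step is to show that $F$ is continuous when restricted to paths of bounded energy $I[\phi] \leq C$ — the only paths that matter for the LDP — and I would do this by applying Gronwall to the difference $|F(\phi_1)-F(\phi_2)|$ and using the Cauchy--Schwarz bound $|\phi_1(t)-\phi_2(t)| \leq (t-T_1)^{1/2}\|\dot\phi_1-\dot\phi_2\|_{L^2}$ on the Cameron--Martin space. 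Once continuity is in hand, the contraction principle transports the Schilder LDP to an LDP for $X_\varepsilon$ with contracted rate function
\[
J[\chi] = \inf\{I[\phi] : F(\phi) = \chi\}.
\]

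Finally I would identify $J$ explicitly. Because $\sigma$ is invertible and $a=\sigma\sigma^T$ is uniformly positive definite, the equation $\sigma(\chi)\dot{\phi} = \dot{\chi} - b(\chi)$ admits the unique solution $\dot{\phi} = \sigma^{-1}(\chi)[\dot{\chi} - b(\chi)]$ whenever $\chi$ is absolutely continuous with square-integrable derivative. Substituting this into $I[\phi]$ collapses the infimum to a single value and yields
\[
J[\chi] = \frac{1}{2}\int_{T_1}^{T_2} [\dot{\chi} - b(\chi)]^T a^{-1}(\chi) [\dot{\chi} - b(\chi)] \, \mathrm{d}t,
\]
exactly matching the claimed rate function. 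For paths that are not absolutely continuous no preimage exists, so the infimum is vacuously $+\infty$ and such paths contribute nothing.

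The main obstacle will be the continuity of the It\^o map. The stochastic integral $\int \sigma(X_\varepsilon)\, \mathrm{d}W$ is \emph{not} a continuous functional of the Brownian path in the uniform topology, so one cannot literally substitute sample paths and pass to the limit. The usual remedy is to combine the pathwise ODE argument on the Cameron--Martin space above with an exponential tightness estimate for $\{X_\varepsilon\}$ — typically obtained via Burkholder--Davis--Gundy together with Gronwall applied to $|X_\varepsilon - x|$ — which controls the error between the true diffusion and its pathwise approximation uniformly in $\varepsilon$. Exponential tightness plus the restricted continuity of $F$ then suffice to promote the Schilder LDP through the contraction principle and deliver both the closed-set upper bound and the open-set lower bound of the theorem.
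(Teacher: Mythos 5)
The paper does not prove this theorem at all: it quotes the statement and refers the reader to \textcite{peithmann2007large} and \textcite{Freidlin1984} for the proof, so there is no in-paper argument to compare yours against. Your sketch is the standard textbook route (Schilder's theorem for $\sqrt{\varepsilon}W$, push-forward through the It\^o solution map, identification of the contracted rate function using the invertibility of $\sigma$), and the overall architecture is sound. You also correctly isolate the crux: the It\^o map is not continuous on $C[T_1,T_2]$ in the uniform topology, so the naive contraction principle does not apply. The one place where your sketch is too loose is the claimed remedy. Exponential tightness of $\{X_\varepsilon\}$ together with continuity of $F$ restricted to bounded-energy Cameron--Martin paths is \emph{not} sufficient to transport the LDP, because the events $\{X_\varepsilon \in F\}$ are not events about $F(\sqrt{\varepsilon}W)$ for any fixed continuous $F$. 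What is actually needed is the extended contraction principle with \emph{exponentially good approximations}: one constructs continuous maps $F_m$ (e.g.\ by freezing the coefficients on a dyadic time grid, so that $F_m(\sqrt{\varepsilon}W)$ is an explicit continuous functional of the Brownian path), proves via Gronwall and an exponential martingale or Burkholder--Davis--Gundy estimate that $\limsup_{\varepsilon\to 0}\varepsilon\log\mathbb{P}(\|X_\varepsilon - F_m(\sqrt{\varepsilon}W)\|_\infty > \delta) \to -\infty$ as $m\to\infty$, and checks that $F_m \to F$ uniformly on the compact level sets $\{I[\phi]\le C\}$. With that substitution your outline becomes the proof given in, for instance, Dembo and Zeitouni's Chapter 5; the remaining steps (uniqueness of the control $\dot{\phi} = \sigma^{-1}(\chi)[\dot{\chi}-b(\chi)]$ collapsing the infimum, and the value $+\infty$ off the absolutely continuous paths) are correct as written.
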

As above, when $F$ and $U$ are chosen appropriately this asymptotic equality $\asymp$ corresponds to approximate equality $\approx$.

Written in the form of \ref{gibbsform} the density looks very similar to the \textit{Gibbs measure} of a random process given by $\frac{1}{Z(\varepsilon)}e^{-E(\chi)/\varepsilon}$ where $E(\chi)$ is called the \textit{energy} of the path $\chi$. This measure is commonly used in statistical mechanics and thermodynamics but it has also found prominent application in machine learning as \textit{Gibbs sampling} in restricted Boltzmann machines. The existence of such an energy function for a stochastic process defined by \ref{statement421} is guaranteed by the theorem of Hammersley and \textcite{clifford1990markov} since all Itô diffusion processes have the Markov property\footnote{A proof can be found in section 7 of \textcite{oksendal2003stochastic}.}. The Freidlin-Wentzell theorem then provides an expression for the energy in the small-noise limit and says that the density is determined by the path minimizing this energy.

In physics the \textit{rate function} $J[\chi]$ is commonly called the \textit{action of a system}. In the case of \ref{fwtheo} it would be the action with respect to a \textit{Lagrangian} functional defined by $\mathcal{L}(\chi) = \frac{1}{2}[\dot{\chi} - b(\chi)]^Ta^{-1}(\chi)[\dot{\chi} - b(\chi)]$. The infimum path over $\chi\in C[T_1,T_2]$ of such a system is commonly called a stationary path and by Hamilton's principle it is the path through the state spate that the deterministic system corresponding to $\mathcal{L}$ would take. The connection between stochastic processes and a Lagrangian has been examined by \textcite{de1992stochastic} although the particular Lagrangian $\mathcal{L}$ that is implied in \ref{fwtheo} is not mentioned. I am not going to consider this connection to physics further and I will focus on the application of the theorem to simple neuron models.

\section{Kramers' law}\label{kl}
The infimum over the rate function that is determined for applying the Freidlin-Wentzell theorem can be used to derive another very important invariant: the expected value of the first-exit time $T^\varepsilon_D$. This result is called Kramers' law\footnote{Not to be confused with Harald Cramér.}, named after H.A. Kramers who determined an approximate equation for the diffusion of Brownian Motion in chemical reactions, and can be found in its LDT formulation alongside a full proof in \textcite{Freidlin1984}. $x(t)$ denotes again the deterministic solution of \ref{detersys} and $\mathbb{E}^{x_0}$ denotes that the expected value is calculated with respect to the stochastic process that satisfies the initial condition $X(0) = x_0$.
\begin{theorem}[Kramers' law]\label{kramerslaw}
Let $D\subset\mathbb{R}^n$ be a bound set that is enclosed in the domain of attraction of  the system \ref{detersys}, i.e. that satisfies:
\begin{enumerate}
\item The deterministic system possesses a unique stable equilibrium point $x^* \in D$
\item The solutions of the deterministic system satisfy
\begin{equation}
x_0 \in D \Rightarrow x_t \in D\; \forall t > 0 \wedge \lim_{t\to\infty}x_t = x^*.
\end{equation}
\end{enumerate}
Then assuming that for $F_{t,y} = \{\chi\in C[0,t] : \chi(0) = x^* , \chi(t) = y\}$ $$V^* := \inf_{y\in\delta D}\inf_{t>0}\inf_{\chi\in F_{t,y}}J(\chi) < \infty$$
it follows that for all initial conditions $x_0 \in D$
\begin{equation}\label{rhskr}
\lim_{\varepsilon\to0}\varepsilon \log \mathbb{E}^{x_0}[T_D^\varepsilon]= V^*
\end{equation}
and
\begin{equation}
\lim_{\varepsilon\to 0}\mathbb{P}(e^{\frac{V^*-\delta}{\varepsilon}} < T_D^\varepsilon < e^{\frac{V^* + \delta}{\varepsilon}}) = 1 \textrm{ for all } \delta > 0
\end{equation}
\end{theorem}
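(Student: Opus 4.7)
The plan is to combine two complementary applications of the Freidlin--Wentzell theorem \ref{fwtheo}, yielding matching upper and lower bounds on $\mathbb{E}^{x_0}[T_D^\varepsilon]$, and then to derive the concentration statement as a corollary via Markov's inequality. Throughout, I will exploit the fact that $D$ lies in the domain of attraction of $x^*$ to ``forget'' the initial condition $x_0$ after a fixed deterministic time, and the strong Markov property of the It\^{o} diffusion $X_\varepsilon$ to iterate single-window estimates.

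For the upper bound on the expected exit time, fix $\eta > 0$. By definition of $V^*$, I can select $T_0 > 0$, a point $y^*\in\partial D$, and a path $\chi^* \in F_{T_0, y^*}$ with $J(\chi^*) < V^* + \eta/2$. Since $\lim_{t\to\infty}x_t = x^*$ for any $x_0 \in D$, the deterministic flow enters any fixed neighborhood of $x^*$ within some uniform time $T_1$; by the Freidlin--Wentzell lower bound applied on $[0,T_1]$, the diffusion $X_\varepsilon$ reaches that neighborhood with probability tending to $1$. Concatenating with $\chi^*$ and applying the open-set lower bound of \ref{fwtheo} to a tube of paths uniformly close to the concatenated path (which all exit $D$ before time $T_0+T_1$ provided the tube radius is small enough), I obtain
$$p_\varepsilon := \inf_{x\in D}\mathbb{P}^{x}\!\left(T_D^\varepsilon \leq T_0 + T_1\right) \geq \exp\{-(V^* + \eta)/\varepsilon\}$$
for all sufficiently small $\varepsilon$. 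A geometric iteration via the strong Markov property at times $k(T_0+T_1)$ then gives $\mathbb{E}^{x_0}[T_D^\varepsilon] \leq (T_0+T_1)/p_\varepsilon$, hence $\limsup_{\varepsilon\to 0}\varepsilon\log\mathbb{E}^{x_0}[T_D^\varepsilon] \leq V^* + \eta$.

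For the lower bound, fix $T > 0$ and let $F_T \subset C[0,T]$ be the closed set of paths that leave $D$ at some time in $[0,T]$. By definition of $V^*$, every $\chi \in F_T$ satisfies $J(\chi) \geq V^*$, so the closed-set upper bound of \ref{fwtheo} yields $\mathbb{P}^{x}(T_D^\varepsilon \leq T) \leq \exp\{-(V^* - \eta)/\varepsilon\}$ uniformly in $x\in D$ for $\varepsilon$ small. Setting $N_\varepsilon := \lfloor \exp\{(V^* - 2\eta)/\varepsilon\}/T \rfloor$ and unioning over the $N_\varepsilon$ blocks of length $T$ via the strong Markov property, I obtain $\mathbb{P}^{x_0}(T_D^\varepsilon \leq N_\varepsilon T) \leq N_\varepsilon \exp\{-(V^* - \eta)/\varepsilon\} \to 0$, so $\mathbb{E}^{x_0}[T_D^\varepsilon] \geq N_\varepsilon T\,(1-o(1))$ and $\liminf_{\varepsilon\to 0}\varepsilon\log\mathbb{E}^{x_0}[T_D^\varepsilon] \geq V^* - 2\eta$. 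Letting $\eta \downarrow 0$ gives \ref{rhskr}. The concentration statement then drops out: the lower tail $\mathbb{P}(T_D^\varepsilon < e^{(V^*-\delta)/\varepsilon}) \to 0$ is exactly the block estimate just derived, and the upper tail follows from Markov's inequality, $\mathbb{P}(T_D^\varepsilon > e^{(V^*+\delta)/\varepsilon}) \leq \mathbb{E}[T_D^\varepsilon]\,\exp\{-(V^*+\delta)/\varepsilon\} \to 0$ by the already-established upper bound.

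The main obstacle will be making the iteration rigorous: Freidlin--Wentzell is stated on a fixed finite interval, but the renewal argument requires both $p_\varepsilon$ and the per-block exit bound to hold \emph{uniformly} in the starting point ranging over $D$ (or rather, a compact set containing the attractor). This demands a little care to replace the pointwise LDT with uniform versions, using Lipschitz continuity of $b$ and $\sigma$ together with compactness. A subsidiary difficulty is arguing that a near-minimizer $\chi^*$ of $V^*$ exists and that the open tube around it genuinely consists of paths exiting $D$; this relies on the lower semicontinuity of $J$ on $C[T_1,T_2]$ and on $x^*$ being an interior point, both of which are standard consequences of the hypotheses.
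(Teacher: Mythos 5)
The paper itself does not prove this theorem; it cites \textcite{Freidlin1984} for the full proof, so there is no in-paper argument to compare against. Your sketch follows the classical Freidlin--Wentzell architecture (tube estimate plus geometric iteration for the upper bound, per-block estimates for the lower bound, Markov's inequality for the upper tail), but the lower bound as written contains a genuine error. You claim that every path in $F_T$ --- paths started anywhere in $D$ that exit within $[0,T]$ --- satisfies $J(\chi)\geq V^*$, ``by definition of $V^*$'', and hence that $\mathbb{P}^x(T_D^\varepsilon\leq T)\leq e^{-(V^*-\eta)/\varepsilon}$ uniformly in $x\in D$. This is false: $V^*$ is an infimum only over paths started \emph{at $x^*$}. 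For $x$ near $\partial D$ the cheapest exiting path has action close to $0$ (in the one-dimensional OU case of Theorem \ref{kramer_est} the cost to go from $x$ to $z$ is $\theta(z-\mu)^2-\theta(x-\mu)^2$, which vanishes as $x\to z$), so the exit probability within time $T$ is not exponentially small at rate $V^*$ and your union bound over $N_\varepsilon$ blocks collapses. The standard repair is the excursion decomposition: first show that from any $x_0\in D$ the process reaches a small ball $B_\rho(x^*)$ before exiting $D$ with probability tending to $1$ (this is where the basin-of-attraction hypothesis and the initial condition are ``forgotten''), then iterate over excursions between $\partial B_\rho(x^*)$ and $\partial B_{2\rho}(x^*)\cup\partial D$, for which the per-excursion exit probability is at most $e^{-(V^*-h(\rho))/\varepsilon}$ with $h(\rho)\to0$ as $\rho\to0$, and each excursion consumes a time bounded below so that the excursion count converts into a time. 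Your closing remark about needing uniformity ``over a compact set containing the attractor'' gestures at this, but the argument in the body asserts uniformity over all of $D$, which is what fails.

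A second, smaller defect sits in the upper bound: a tube of paths around the concatenated path ending at $y^*\in\partial D$ does \emph{not} consist of exiting paths, no matter how small the tube radius --- a path uniformly $\delta$-close to one that merely touches the boundary may stay inside $D$. You must first extend $\chi^*$ to terminate at a point at distance greater than the tube radius outside $\overline{D}$; the uniform ellipticity of $a=\sigma\sigma^T$ guarantees this costs only an additional $o(1)$ in action. With these two repairs (and the uniform version of the large deviation bounds over compact sets of initial conditions, which you correctly identify as necessary) the sketch becomes the standard proof.
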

\begin{figure}
\centering
\includegraphics[scale=0.4]{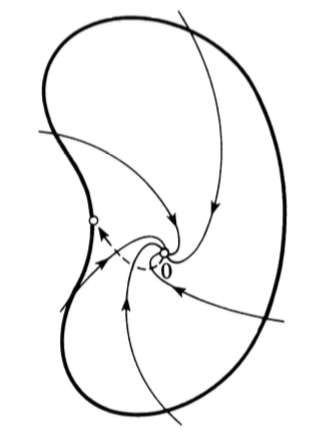}
\caption{An illustration of the exit from a bounded domain which contains an attractor, taken from \textcite{Freidlin1984} page 105.}
\label{asympExit}
\end{figure}

The disappearance of $x_0$ on the right hand side of the mean first-passage time (FPT) in \ref{rhskr} is due to the fact that ``[in the] asymptotic case the mean FPT, $\mathbb{E}[T]$ loses the dependency upon the initial value [$x_0$]``\footnote{Section 5.4.1.3 in Stochastic Biomathematical Models: with Applications to Neuronal Modeling by \textcite{StoBioMod}}\footnote{See also Section 2 of \textcite{berglund2011kramers} for more intuition and for the sketch of a proof.}. When $\varepsilon$ is small enough the time required to leave the domain of attraction is by orders of magnitude greater than the time to reach the attractor from any given $x_0 \in D$ so that by the Markovian nature of the Itô diffusion processes any initial deviation becomes irrelevant. This simplifying assumption gives us an easy way of estimating the expected first-passage time but it also indicates how rough such an estimate will be. An illustration of this process can be seen in Figure \ref{asympExit}. While the exponential $e^{V^*/\varepsilon}$ will dominate in the small-noise limit, subexponential factors can strongly bias this estimate. Methods to determine this subexponential factor have been proposed by \textcite{PhysRevE.48.931} but remain according to \textcite{berglund2011kramers} and my research currently without rigorous mathematical proof so I have not incorporated them into this work. 

\chapter{Parameter Estimation} 

\label{Chapter5} 


Generally, the more details of a complex reality a model incorporates, the more difficult it becomes to investigate the properties of the model. While the stochastic version of the Leaky Integrate and Fire Neuron model can be stated rather concisely, the identification of important densities and related invariants soon becomes very difficult. A good example for this is the work of \textcite{10.2307/3214232} who was the first to solve the long standing problem of finding a closed-form expression for the moments of the first-passage-time of an OU-process. His publication marks the end of an elongated effort of many researchers to achieve detailed knowledge about the behavior of the OU-process and its corresponding random first-passage time $T^\varepsilon$.

The results by Ricciardi are complete from a mathematical standpoint, however their derivation is very complicated in places and obstructs a clear understanding of what these results \textit{mean} for the neuron model that they describe. Since 1988 some simpler derivations of related results have been presented\footnote{For example by \textcite{yi2006useful}.} but they are still very technical.

In this chapter I am going to apply the previously introduced methods from large deviation theory to derive well-known results from neuroscience using much simpler mathematics than the original publications. In exchange the number and the scope of my derived results is also much smaller than the original publications. These methods should be applicable since the noise term $\varepsilon$ that was empirically determined by \textcite{lansky2006parameters} or \textcite{picchini2008parameters} is on the order of $10^{-2} [V/\sqrt[]{s}]$ which is small enough that large deviations become rare events.

\section{Stochastic Leaky Integrate and Fire}
As introduced in \ref{chap3} the Stochastic LIF is a neuron model that is described through a stochastic version of the usual LIF differential equation. While in the deterministic model a sub-threshold excitation of the neuron will never evoke an action potential, once stochastic fluctuations are introduced into the system there is always a \textit{probability} that a potential is evoked.

For the OU-process the transition probability that the membrane potential lies in $[x -\delta , x + \delta]$ at time $T_2$ given that at time $T_1$ it was equal to $v_1$ can be stated explicitly through the corresponding density function $\mathbb{P}(V_\mathrm{m}(T_2)\in[x-\delta,x+\delta]|V_\mathrm{m}(T_1)=v_1)\approx \mathbf{P}(x|T_2,v_1,T_1)\times 2\delta$. Large deviation theory tells us that this density decays exponentially fast with rate $\inf_\chi J[\chi]/\varepsilon^2$ where the infimum is taken over all the possible (continuous) paths from $v_0$ to $x$. In the one-dimensional case where the noise is state-independent the rate function $J[\chi]$ reduces to $\int_{T_1}^{T_2} (\dot{\chi}(t) - b(\chi)(t))^2 \mathrm{d}t$ which is minimized by the path whose derivative diverges the least (in a mean-squared sense) from the derivative of the deterministic solution given by $\frac{\mathrm{d}V(t)}{\mathrm{d}t}=\mathbf{b}(V)(t)$. Thus in the small-noise limit the probability of an unlikely deviation from the deterministic trajectory is determined by the most likely of possible (unlikely) trajectories consistent with the model.
\begin{theorem}\label{transition}
Let $V_{{\mathrm  {m}}}(t)$ be an Ornstein-Uhlenbeck process satisfying
\begin{equation}\label{sde}
\mathrm{d}V_m(t) = -\theta(V_{{\mathrm  {m}}}(t) - \mu)\mathrm{d}t + \varepsilon \mathrm{d}W(t) \quad V_m(T_1) = v_1\end{equation}
Then in the small noise limit $\varepsilon \to 0$ the transition density $\mathbf{P}(x|T_2,v_1,T_1)$ approaches the density of a Gaussian Normal distribution with the following parameters $\mathcal{N}(\mu - (\mu - v_1)e^{\theta(T_1 - T_2)},\frac{\varepsilon^2}{2\theta} \left(1 - \mathrm{e}^{2\,\theta (\mathrm{T_1-T_2})} \right)$.
\end{theorem}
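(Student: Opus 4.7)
The plan is to apply the Freidlin--Wentzell theorem (\ref{fwtheo}) to the OU-SDE (\ref{sde}) in its informal Gibbs form,
\[
\mathbf{P}(x|T_2,v_1,T_1) \;\asymp\; \exp\!\Bigl\{-\tfrac{J^{*}(x)}{\varepsilon^{2}}\Bigr\},
\]
where $J^{*}(x) = \inf_{\chi} J[\chi]$ is the infimum of the rate function over absolutely continuous paths $\chi:[T_1,T_2]\to\mathbb{R}$ with $\chi(T_1)=v_1$ and $\chi(T_2)=x$. The SDE carries noise $\varepsilon\,\mathrm{d}W_t$ rather than $\sqrt{\varepsilon}\,\mathrm{d}W_t$, which is why the scaling in the exponent becomes $\varepsilon^{2}$, and since the diffusion coefficient is the constant $1$ the rate function collapses to the one-dimensional expression
\[
J[\chi] \;=\; \tfrac{1}{2}\int_{T_1}^{T_2}\bigl(\dot\chi(t) + \theta(\chi(t)-\mu)\bigr)^{2}\,\mathrm{d}t.
\]
If $J^{*}(x)$ turns out to be a non-degenerate quadratic in $x$, the exponent has Gaussian shape and the claimed mean and variance can be read off by matching coefficients.

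First I would derive the Euler--Lagrange equation for the Lagrangian $L(\chi,\dot\chi) = \tfrac{1}{2}(\dot\chi + \theta(\chi-\mu))^{2}$; a direct computation gives the linear second-order ODE $\ddot\chi = \theta^{2}(\chi-\mu)$, whose general solution is $\chi(t) = \mu + A e^{\theta t} + B e^{-\theta t}$. Next I would impose the boundary conditions $\chi(T_1)=v_1$, $\chi(T_2)=x$ to pin down $A$ and $B$. A convenient simplification occurs here: on any trajectory of this form the integrand of $J$ reduces to $(\dot\chi + \theta(\chi-\mu))^{2} = (2A\theta e^{\theta t})^{2}$, so the action evaluated at the extremal becomes a single elementary exponential integral that depends only on $A$. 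Substituting the value of $A$ dictated by the boundary data leaves $J^{*}(x)$ as a clean quadratic in $x$ whose unique minimizer is precisely the deterministic OU endpoint $\mu - (\mu-v_1)e^{\theta(T_1-T_2)}$.

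The last step is bookkeeping: dividing $J^{*}(x)$ by $\varepsilon^{2}$ and comparing with the Gaussian exponent $(x-m)^{2}/(2s^{2})$ immediately yields the claimed mean $m = \mu - (\mu-v_1)e^{\theta(T_1-T_2)}$ and the claimed variance $s^{2} = \tfrac{\varepsilon^{2}}{2\theta}\bigl(1-e^{2\theta(T_1-T_2)}\bigr)$.

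The main obstacle is conceptual rather than computational. Freidlin--Wentzell only provides the leading logarithmic asymptotics of probabilities of \emph{sets} of paths, so strictly speaking the theorem yields the Gaussian shape of the transition density at the level of the exponent but neither the prefactor $1/\sqrt{2\pi s^{2}}$ nor a bona-fide point-wise density. I would handle this gap by a contraction argument in the spirit of Varadhan's lemma, restricting to paths ending in a shrinking window around $x$ and letting the window width tend to zero, together with the observation that once the limiting log-density is a non-degenerate quadratic and the transition density must integrate to one, the prefactor is forced to take the standard Gaussian form in the small-noise limit.
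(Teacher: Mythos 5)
Your proposal follows essentially the same route as the paper's proof: apply Freidlin--Wentzell with the $\varepsilon^{2}$ scaling, solve the Euler--Lagrange equation $\ddot\chi=\theta^{2}(\chi-\mu)$ with the two boundary conditions, exploit the collapse of the integrand to $(2A\theta e^{\theta t})^{2}$, and read off the Gaussian mean and variance from the resulting quadratic, fixing the prefactor by normalization. The only step the paper includes that you omit is an explicit verification (via convexity of $J$) that the Euler--Lagrange extremal is a global minimum rather than merely a critical point.
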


\begin{proof}
Let $E_{(T_1,T_2,U)}=\{\omega \in C^1[T_1 , T_2]\ | \omega(T_1) = v_1 , \omega(T_2) = U \}$. With \ref{fwt} applied to \ref{sde} we immediately derive $$\mathbf{P}(V_\mathrm{m} \in E_{(T_1,T_2,U)}) \asymp \exp\{-\inf_{\omega\in E_{(T_1,T_2,U)}}(J[\omega]) / \varepsilon^2\}$$ with $J:E_{(T_1,T_2,U)}\to \mathbb{R} , \omega \to \frac{1}{2}\int_{T_1}^{T_2} [\dot{\omega} +\theta(\omega-\mu)]^2\mathrm{d}t$ since $a = 1$ and  $\mathbf{b}(V_{{\mathrm  {m}}}(t)) = -\theta(V_m(t)-\mu)$. A necessary condition for $\omega^*$ to be an extremum of $J$ is that $\omega^*$ satisfies the Euler-Lagrange equation\footnote{For more details on stationary paths and the EL-equation see \textcite{sasane2016optimization}.}
\begin{equation}
\frac{\delta{F}}{\delta{A}}(x(t), \dot{x}(t), t) - \frac{\mathrm{d}}{\mathrm{d}t}\left(\frac{\delta F}{\delta B}(x(t), \dot{x}(t), t)\right)= 0,\quad t \in [T_1, T_2]
\end{equation}
where $F(A,B,C) = [B +\theta(A-\mu)]^2$ so that $J[\omega] = \frac{1}{2}\int_{T_1}^{T_2}F(\omega(t),\dot{\omega}(t),t)\mathrm{d}t$. Plugging in F we get the autonomous second order differential equation
\begin{align}
\theta[\dot{\omega} + \theta(\omega - \mu)] - [\ddot{\omega} + \theta\dot{\omega}] &= 0 \\
\Leftrightarrow\qquad \theta^2\omega - \theta^2\mu &= \ddot{\omega}
\end{align}
which is solved by
\begin{multline}
\omega^*(t) = \mu + \frac{e^{-\theta t}\left(Ue^{T_1\theta} - \mu e^{T_1\theta} + \mu e^{T_2\theta} - v_1e^{T_2\theta}\right)}{e^{\theta(T_1-T_2)} - e^{\theta(T_2-T_1)}} \\
- \frac{e^{\theta t}\left(Ue^{-T_1\theta} - \mu e^{-T_1\theta} + \mu e^{-T_2\theta} - v_1e^{-T_2\theta}\right)}{e^{\theta(T_1-T_2)} - e^{\theta(T_2-T_1)}}
\end{multline}
with $w^*(T_1)=v_1 , w^*(T_2)=U$ as required. 
This extremum is also a minimum\footnote{Sufficient conditions for a minimum are in general very difficult to obtain in the calculus of variations, see the historical review by \textcite{fraser2009sufficient} for an overview of methods.} since for $\omega_1,\omega_2 \in C^1[T_1,T_2]$ and $\lambda \in [0,1]$ we can write $J[\lambda\omega_1 + (1-\lambda)\omega_2]$ as
\begin{align}
	J[\lambda\omega_1 + (1-\lambda)\omega_2] &= \frac{1}{2}\int_{T_1}^{T_2} \left[ \lambda \dot{\omega}_1 + (1-\lambda)\dot{\omega}_2 + \theta(\lambda \omega_1 + (1-\lambda)\omega_2 - \mu)  \right]^2\mathrm{d}t \\
	&= \frac{1}{2}\int_{T_1}^{T_2} \left[ \lambda( \dot{\omega}_1 + \theta(\omega_1 - \mu)) + (1-\lambda)(\dot{\omega}_2 + \theta(\omega_2 - \mu)) \right]^2\mathrm{d}t \\
	& \leq   \lambda J[\omega_1] + (1-\lambda)J[\omega_2]
\end{align} 
where the last inequality follows since the composition of taking the $L^2$ norm and squaring it is convex\footnote{The composition $\phi \circ \psi$ of two convex functions $\phi,\psi$ is convex if $\phi$ is also monotone. This is given since $x \mapsto x*x$ restricted to $\mathbb{R}_{\geq}$ is monotonically increasing and the $L^2$ norm only assumes values in $\mathbb{R}_{\geq}$.}. Therefore J is a convex functional and the local extremum $\omega^*$ must also be a global minimum.
We can derive 
\begin{align*}
\inf_{\omega\in E_{T_1,T_2,U}}(J[\omega]) &= J[\omega^*] = \frac{1}{2}\int_{T_1}^{T_2} [\dot{\omega}^* +\theta(\omega^*-\mu)]^2\mathrm{d}t \numberthis\\
 &= \frac{1}{2}\int_{T_1}^{T_2}\bigg[ \frac{2\theta e^{\theta t}\left(Ue^{-T_1\theta} - \mu e^{-T_1\theta} + \mu e^{-T_2\theta} - v_1e^{-T_2\theta}\right)}{e^{\theta(T_1-T_2)} - e^{\theta(T_2-T_1)}}\bigg]^2 \mathrm{d}t\numberthis\\
&= \int_{T_1}^{T_2}2\theta e^{2\theta t}\mathrm{d}t\times\\ &\qquad\theta\bigg[ \left(\frac{e^{\theta(T_1 + T_2)}}{e^{\theta(T_1 + T_2)}}\right)\frac{Ue^{-T_1\theta} - \mu e^{-T_1\theta} + \mu e^{-T_2\theta} - v_1e^{-T_2\theta}}{e^{\theta(T_1-T_2)} - e^{\theta(T_2-T_1)}}\bigg]^2 \numberthis\\
&= (e^{2\theta T_2} - e^{2\theta T_1})\times\theta\bigg[ \frac{Ue^{T_2\theta} - \mu e^{T_2\theta} + \mu e^{T_1\theta} - v_1e^{T_1\theta}}{e^{2\theta T_1} - e^{2\theta T_2}}\bigg]^2 \numberthis\\
&= \frac{e^{2\theta T_2}\left(U - \mu + (\mu - v_1)e^{\theta(T_1-T_2)}\right)^2}{\frac{1}{\theta}(e^{2\theta T_2} - e^{2\theta T_1})}\numberthis\\
&= \frac{\left(U - \mu + (\mu - v_1)e^{\theta(T_1-T_2)}\right)^2}{\frac{1}{\theta}(1- e^{2\theta (T_1-T_2)})} = \frac{(U - \xi)^2}{2\sigma^2}\numberthis
\end{align*}
with $\xi = \mu - (\mu - v_1)e^{\theta(T_1-T_2)}$ and $\sigma^2 = \frac{1}{2\theta} \left(1 - \mathrm{e}^{2\,\theta (\mathrm{T_1-T_2})}\right)$

Therefore the original statement follows since the resulting expression must be a probability density in U
\begin{align}
\mathbf{P}(V_\mathrm{m} \in E_{(T_1,T_2,U)}) & \asymp \exp\{-\frac{(U - \xi)^2}{2\varepsilon^2\sigma^2}\}
\end{align}
\end{proof}

\begin{corollary}
For $T_1 \to -\infty$ we retrieve the stationary normal distribution of the OU-process with $\mathcal{N}(\mu , \frac{\varepsilon^2}{2\theta})$.
\end{corollary}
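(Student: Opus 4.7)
The plan is to apply Theorem~\ref{transition} directly and evaluate the limit of the two Gaussian parameters as $T_1 \to -\infty$. Since the mean-reversion rate $\theta$ is strictly positive, $\theta(T_1 - T_2) \to -\infty$, and hence both $e^{\theta(T_1 - T_2)}$ and $e^{2\theta(T_1 - T_2)}$ tend to zero.

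Substituting into the mean $\xi = \mu - (\mu - v_1)e^{\theta(T_1 - T_2)}$ immediately gives $\xi \to \mu$, so the dependence on the initial condition $v_1$ vanishes. Likewise the variance $\frac{\varepsilon^2}{2\theta}\bigl(1 - e^{2\theta(T_1 - T_2)}\bigr)$ collapses to $\frac{\varepsilon^2}{2\theta}$. Combining these two limits yields exactly the claimed stationary Gaussian $\mathcal{N}\bigl(\mu, \frac{\varepsilon^2}{2\theta}\bigr)$, which matches the invariant measure quoted in Section~\ref{chap3} and agrees with the derivation in Example~6.3 of~\textcite{touchette2009large}.

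There is no serious obstacle here; all of the real work sits inside Theorem~\ref{transition}. The only thing worth remarking on is the interpretation: the limit encodes the loss-of-memory property of the mean-reverting dynamics — if the process is started infinitely far in the past, the exponential pull towards $\mu$ has erased all information about $v_1$ and what remains at time $T_2$ is precisely the invariant distribution. A minor technical point to verify in a careful write-up is that one is allowed to read off the stationary density from the pointwise limit of the large-deviation Gaussian approximations; this is straightforward because the parameters $(\xi, \sigma^2)$ converge and the Gaussian density depends continuously on them.
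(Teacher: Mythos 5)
Your proposal is correct and is exactly the argument the paper intends: the corollary is stated as an immediate consequence of Theorem~\ref{transition}, obtained by letting $\theta(T_1-T_2)\to-\infty$ so that both exponentials vanish, giving mean $\mu$ and variance $\frac{\varepsilon^2}{2\theta}$. Your additional remarks on loss of memory and continuity of the Gaussian density in its parameters are sensible but not needed beyond what the paper already assumes.
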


\begin{corollary}
For $T_1 = 0$ we get the transition density as a normal distribution with $\mathcal{N}(\mu - (\mu - v_1)e^{-\theta T_2}, \frac{\varepsilon^2}{2\theta}(1-e^{-2\theta T_2}))$. If we set $\beta := \theta, v_r = v_0$ and $\mu := v_r + \frac{\mu}{\beta}$ as described in \ref{chap3} we retrieve the transition density as it is stated in \textcite{lansky2006parameters} with $\mathcal{N}(v_1 + \frac{\mu}{\beta}(1-e^{-\beta T_2}) , \frac{\varepsilon^2}{2\beta}(1-e^{-2\beta T_2}))$
\end{corollary}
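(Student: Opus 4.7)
The plan is to reduce both claims to direct substitution into Theorem~\ref{transition}, with the only subtlety being careful bookkeeping of the symbol $\mu$, which is used with two different meanings in the simple form \ref{simpleform} and the Lansky form \ref{complicatedformulation}.

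For the first assertion, I would simply evaluate the parameters from Theorem~\ref{transition} at $T_1 = 0$. Since $e^{\theta(T_1-T_2)}$ and $e^{2\theta(T_1-T_2)}$ collapse to $e^{-\theta T_2}$ and $e^{-2\theta T_2}$ respectively, the stated distribution $\mathcal{N}\bigl(\mu - (\mu-v_1)e^{-\theta T_2},\,\tfrac{\varepsilon^{2}}{2\theta}(1-e^{-2\theta T_2})\bigr)$ drops out at once; no further work is needed here.

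For the second assertion I would first rewrite the mean in the more suggestive shape $\mu(1-e^{-\theta T_2}) + v_1\,e^{-\theta T_2}$, and then invoke the dictionary from Section~\ref{chap3}: rename $\theta$ as $\beta$ and replace the theorem's $\mu$ (the mean-reversion level of \ref{simpleform}) by $v_r + \mu/\beta$, where the $\mu$ on the right now denotes the drift coefficient of \ref{complicatedformulation}. The hypothesis $v_r = v_0$ together with $T_1 = 0$ forces $v_1 = v_0 = v_r$; substituting this, the $v_r$-terms consolidate into a single $v_r$ while the remaining contribution is $\frac{\mu}{\beta}(1 - e^{-\beta T_2})$. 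Reidentifying $v_r$ with $v_1$ yields the target mean $v_1 + \frac{\mu}{\beta}(1-e^{-\beta T_2})$, and the variance passes through unchanged in form as $\frac{\varepsilon^{2}}{2\beta}(1-e^{-2\beta T_2})$ under the same renaming.

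The principal obstacle is not analytic but notational: one has to be explicit that the substitution $\mu := v_r + \mu/\beta$ is a change of variable relating two symbols that happen to share a name, rather than a self-referential identity, and that $v_1 = v_r$ is only forced because the initial time is $T_1 = 0$ and $v_0 = v_r$ is assumed. Once this is made precise, the corollary follows from a few lines of elementary algebra and needs no further input from large deviation theory or from the Euler--Lagrange machinery used in the proof of the theorem itself.
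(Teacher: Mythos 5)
Your proposal is correct and matches the route the paper implicitly intends: the corollary is stated without proof as a direct substitution of $T_1=0$ and the parameter dictionary $\beta:=\theta$, $\mu:=v_r+\mu/\beta$ into Theorem \ref{transition}. You also correctly flag the one genuine subtlety the paper glosses over, namely that collapsing the mean to $v_1+\frac{\mu}{\beta}(1-e^{-\beta T_2})$ requires $v_1=v_r$, which follows only from combining $T_1=0$ (so $v_1=v_0$) with the hypothesis $v_r=v_0$.
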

Next I apply Kramers' law to get an estimate for the expected first-passage time, i.e. the time the neuron is expected to cross a given threshold $V_\mathrm{th}$ given that at time 0 the membrane potential equals $v_1$. This estimate is again derived by minimizing the rate function: We determine the most likely path going from $v_r$ at time $0$ to $V_\mathrm{th}$ in an arbitrary amount of time. This can again be considered as choosing the most likely of a number of unlikely paths. Kramers' law tells us that this path determines (in the small-noise-limit) the expected value of the first-passage time $T^\varepsilon$. 

\begin{theorem}\label{kramer_est}
For $D = ]x , z[$, $T_1 = 0 , x < v_1 < \mu < z$ the expected value of the first exit time $T_D^\varepsilon$ from $D$ of the process $V_\mathrm{m}(t)$ with $V_\mathrm{m}(0)=v_1$ can be approximated for small values of $\varepsilon$ by
\begin{equation}
\mathbb{E}^{v_1}(T_D^\varepsilon) \approx \begin{cases}
e^{\frac{\theta}{\varepsilon}(z-\mu)^2} & \mathrm{if } |x-\mu| \geq |z-\mu|\\
e^{\frac{\theta}{\varepsilon}(x-\mu)^2} & \mathrm{else}\\
\end{cases}
\end{equation}
and also
\begin{equation}
\quad \lim_{\varepsilon\to 0}\mathbb{P}(\mathbb{E}^{v_1}(T_D^\varepsilon)e^{-\frac{\delta}{\epsilon}} < T_D^\varepsilon < \mathbb{E}^{v_1}(T_D^\varepsilon)e^{\frac{\delta}{\epsilon}}) = 1 \textrm{ for all } \delta > 0
\end{equation}

\end{theorem}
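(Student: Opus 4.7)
The plan is to invoke Kramers' law (Theorem \ref{kramerslaw}) for the OU dynamics \ref{sde} and reduce the statement to the explicit computation of the exponent $V^*$. First I would verify that the hypotheses of Theorem \ref{kramerslaw} are in force: the deterministic flow $\dot{x} = -\theta(x - \mu)$ has $\mu$ as its unique globally attracting equilibrium, and since $x < \mu < z$ the bounded domain $D = \,]x, z[\,$ contains $\mu$, is forward-invariant under the flow, and every deterministic trajectory started in $D$ converges to $\mu$. Once these hypotheses are in place, Theorem \ref{kramerslaw} applies for any initial condition $v_1 \in D$ and yields both the approximation for the mean first-exit time and the concentration inequality, leaving only the computation of
\begin{equation*}
V^* \;=\; \inf_{y \in \{x, z\}} \; \inf_{t > 0} \; \inf_{\chi \in F_{t, y}} J[\chi], \qquad F_{t, y} = \{\chi \in C^1[0, t] : \chi(0) = \mu, \; \chi(t) = y\},
\end{equation*}
where $J[\chi] = \tfrac{1}{2}\int_0^t [\dot{\chi} + \theta(\chi - \mu)]^2\,\mathrm{d}s$ is the Freidlin--Wentzell rate functional for the OU process.

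The innermost infimum is already done inside the proof of Theorem \ref{transition}: specializing the formula derived there to $v_1 = \mu$, $T_1 = 0$, $T_2 = t$, $U = y$ collapses the mean to $\xi = \mu$ and the variance to $\sigma^2 = \tfrac{1}{2\theta}(1 - e^{-2\theta t})$, leaving
\begin{equation*}
\inf_{\chi \in F_{t, y}} J[\chi] \;=\; \frac{(y - \mu)^2}{2\sigma^2} \;=\; \frac{\theta (y - \mu)^2}{1 - e^{-2\theta t}}.
\end{equation*}
Since $1 - e^{-2\theta t}$ is strictly increasing in $t$ with supremum $1$, the middle infimum over $t > 0$ equals $\theta(y - \mu)^2$, and the outer infimum over $y \in \{x, z\}$ selects whichever endpoint is closer to the attractor $\mu$. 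This gives $V^* = \theta \min\{(x-\mu)^2, (z-\mu)^2\}$, and substitution into Kramers' law reproduces exactly the two cases in the theorem; the second assertion is then the second conclusion of Theorem \ref{kramerslaw} with this $V^*$ inserted.

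The main subtlety I anticipate is the non-attainment of the infimum in $t$: the cheapest escape from $\mu$ to the boundary corresponds to lingering arbitrarily long near $\mu$ and then running to $y$, so the optimal ``path'' exists only as an asymptotic limit. This is a standard feature of Kramers-type problems in one-dimensional gradient-like systems and causes no real trouble, since Theorem \ref{kramerslaw} is stated with an infimum rather than a minimum. A smaller bookkeeping point worth remarking on is that $V^*$ is independent of the initial condition $v_1 \in D$: this is the loss of dependence on $x_0$ already commented on after Theorem \ref{kramerslaw}, reflecting the fact that descending from any $v_1 \in D$ to $\mu$ along the deterministic flow is free under $J$.
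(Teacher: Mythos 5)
Your proposal is correct and follows essentially the same route as the paper: verify the attractor hypotheses of Kramers' law, reuse the minimized rate functional from the proof of Theorem \ref{transition} with starting point $\mu$ to get $V(t,y)=\theta(y-\mu)^2/(1-e^{-2\theta t})$, take $t\to\infty$, and minimize over the two boundary points. Your added remarks on the non-attainment of the infimum in $t$ and the independence from $v_1$ are accurate and slightly more explicit than the paper's treatment.
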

\begin{proof}
The deterministic system
\begin{equation}
\frac{\mathrm{d}v(t)}{\mathrm{d}t} = -\theta(v(t) - \mu) , \quad v(0) = v_1
\end{equation}
has a single global point attractor $v^*$ given by
\begin{align}\label{attractor}
-\theta(v^* - \mu) &= 0 &\Rightarrow v^* &= \mu
\end{align}
which fulfills the necessary conditions of Kramers' Law (\ref{kramerslaw}). From the proof of \ref{transition} we know that for $F_{t,y} = \{\chi : \chi(0) = v^* , \chi(t) = y\}$
\begin{align}
V(t,y) := \inf_{\omega\in F_{t,y}} J[\omega] &= \frac{\theta\left(y - \mu \right)^2}{1- e^{-2\theta t}}
\end{align}
$V(y) = \inf_{t>0}V(t,y)$ can be determined by letting $t \to \infty$:
\begin{align}
V(y) = \lim_{t\to\infty}V(t,y) &= \theta(y- \mu)^2
\end{align}
and therefore for $\delta D = \{x,z\}$ the infimum is assumed at
\begin{equation}
V^* = \inf_{y\in\delta D}V(y) = \begin{cases}
V(z) & \mathrm{if: } |x-\mu| \geq |z-\mu|\\
V(x)  & \mathrm{else}\\
\end{cases}
\end{equation}
and the original statement follows.
\end{proof}
\begin{corollary}\label{boundary}
In a neuron model we generally do not care about the passage through the lower bound $x$ so we can choose it bigger than $z$. Since $x$ then does not appear in the resulting expression it can be chosen arbitrarily large and the result also holds for the one-sided first passage time
\begin{equation}
\mathbb{E}(T^\varepsilon) \approx e^{\frac{\theta}{\varepsilon}(z - \mu)^2}
\end{equation}
\end{corollary}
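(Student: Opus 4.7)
The plan is to reduce the corollary directly to Theorem \ref{kramer_est} by exploiting the fact that, in the regime $|x-\mu| \geq |z-\mu|$, the right hand side of the Kramers estimate contains no $x$ at all. First I would fix any $x$ with $x < v_1$ and $|x-\mu| \geq |z-\mu|$; then the first branch of Theorem \ref{kramer_est} immediately yields $\mathbb{E}^{v_1}(T_D^\varepsilon) \approx e^{\frac{\theta}{\varepsilon}(z-\mu)^2}$ with no $x$-dependence on the right. The remaining task is to show that the one-sided first passage time $T^\varepsilon = \inf\{t > 0 : V_\mathrm{m}(t) \geq z\}$ is well approximated by the two-sided exit time $T_D^\varepsilon$ as $x \to -\infty$, so that the $x$-independent estimate carries over.

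Next I would establish the required comparison between $T^\varepsilon$ and $T_D^\varepsilon$. For $x_1 < x_2 < v_1$ the domains satisfy $]x_1,z[ \supset ]x_2,z[$, so the exit times are monotone in $x$: $T^\varepsilon_{]x_1,z[} \geq T^\varepsilon_{]x_2,z[}$ pointwise on $\Omega$, and $T^\varepsilon \geq T_D^\varepsilon$ for every admissible $x$. Because the OU drift $-\theta(V_\mathrm{m}-\mu)$ pushes the process back toward $\mu > x$ and the stationary variance $\varepsilon^2/(2\theta)$ is finite, the probability that a trajectory reaches $x$ before reaching $z$ tends to zero as $x \to -\infty$, so $T_D^\varepsilon \to T^\varepsilon$ almost surely. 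Monotone convergence then gives $\lim_{x \to -\infty} \mathbb{E}^{v_1}(T_D^\varepsilon) = \mathbb{E}^{v_1}(T^\varepsilon)$, and the claimed estimate $\mathbb{E}(T^\varepsilon) \approx e^{\frac{\theta}{\varepsilon}(z-\mu)^2}$ follows.

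The main obstacle is conceptual rather than technical: Theorem \ref{kramer_est} is an asymptotic statement in $\varepsilon \to 0$, whereas the corollary invites us to also let $x \to -\infty$, and in principle these two limits need not commute. What rescues the argument is precisely the observation that the Freidlin--Wentzell action $V^* = \inf_{y \in \partial D} \theta(y-\mu)^2$ from the proof of \ref{kramer_est} becomes constant in $x$ as soon as $|x-\mu| \geq |z-\mu|$; once we are in that regime, increasing $|x-\mu|$ further cannot change the exponential rate, since the minimizing escape path from the attractor $\mu$ already runs to $z$ and never visits the lower boundary. This stabilization of $V^*$ means the two limits are compatible, and no uniform control of a subexponential prefactor (cf.\ the caveats at the end of Section \ref{kl}) is needed for the leading-order asymptotic claim made by the corollary.
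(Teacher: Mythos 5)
Your proposal is correct and follows essentially the same route as the paper: the paper likewise observes that once $|x-\mu|\geq|z-\mu|$ the first branch of Theorem \ref{kramer_est} contains no $x$, and then argues informally that the lower boundary's relevance disappears in the small-noise limit so that $x$ may be pushed to $-\infty$. You merely make explicit two steps the paper leaves implicit --- the monotone convergence $T_D^\varepsilon \uparrow T^\varepsilon$ as $x \to -\infty$ and the (still heuristic) compatibility of the $x\to-\infty$ and $\varepsilon\to 0$ limits via the stabilization of $V^*$ --- which tightens rather than changes the argument.
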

\begin{figure}
\centering
\centerline{%
\includegraphics[scale=0.25,trim={0 0 10cm 0},clip]{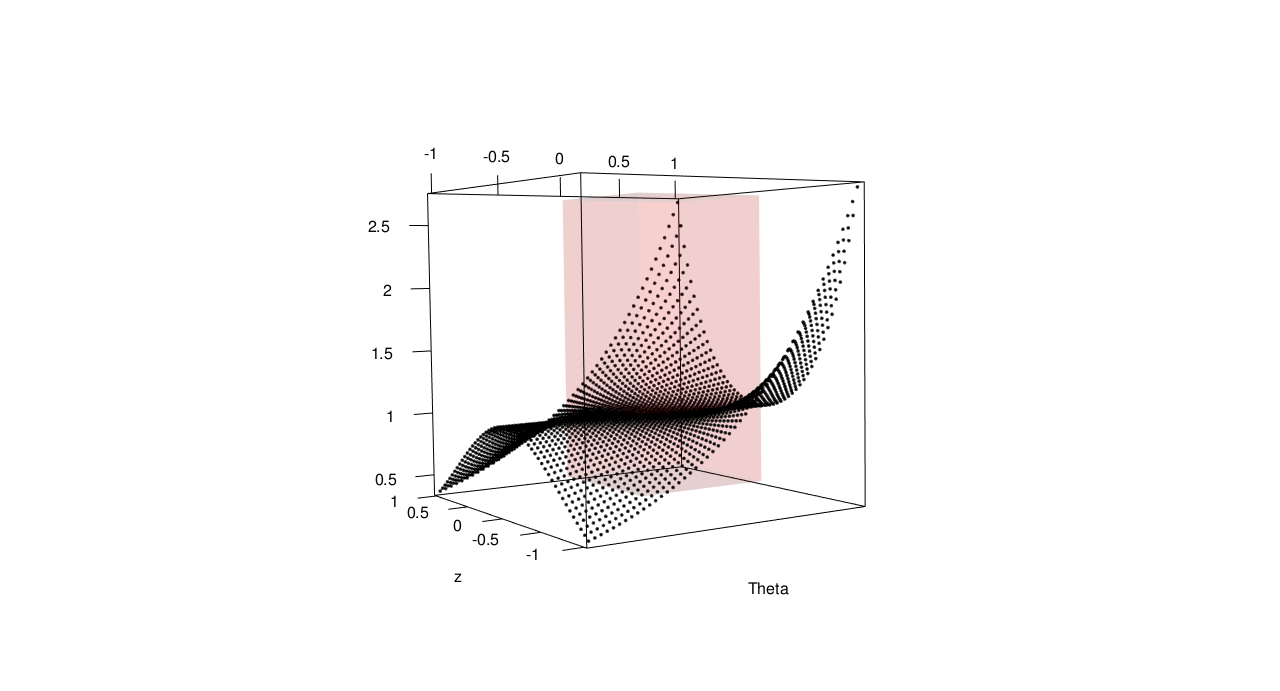}
\includegraphics[scale=0.25,trim={10cm 0 0 0},clip]{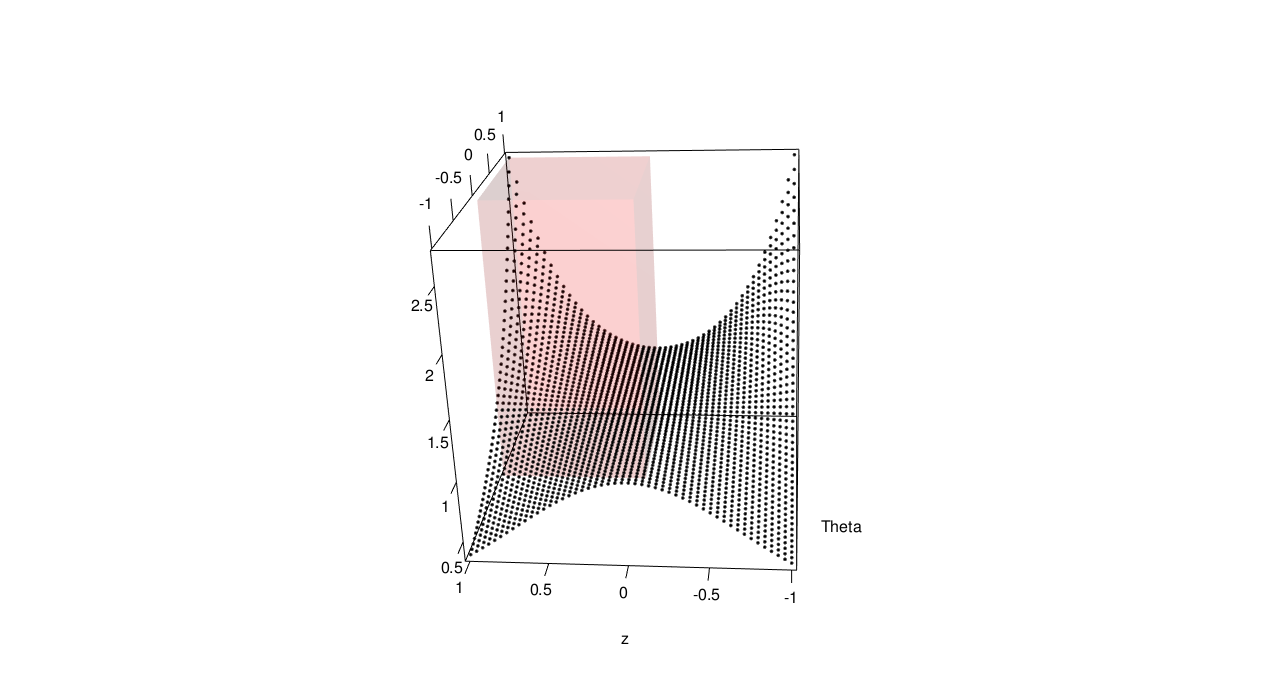}
}%
\caption{A 3D plot of the estimated expected first passage time derived in \ref{boundary} for varying values of the parameters $\theta$ and $z$ assuming $\varepsilon=1, \mu = 0$. Only the portions enclosed by the red box contain plausible parameter values.}
\label{3dplot}
\end{figure}

This is a reasonable result since the FPT should generally be non-negative and it should decrease with a larger decay rate or a larger noise term. It should also be smaller if the threshold is closer to the equilibrium potential. All of these conditions are incorporated into the estimator which I have plotted in Figure \ref{3dplot}. Nonetheless the result can feel rather unintuitive since a smaller domain should reduce the expected first exit time. But since this statement only holds in the small-noise limit we can again apply the intuition of the most likely of all unlikely paths: As soon as the passage through one of the boundaries from $D$ becomes slightly more unlikely than the passage through the other, its relevance in the small-noise limit disappears. Therefore the smaller the value of $\varepsilon$, the closer $L_1$ can be to $L_2$ and choosing $L_1=\infty$ will only improve the rate of convergence of our estimate.

\begin{remark}
When we choose $D = ]v_r - L , v_r + U[$ with $U \geq L$ and set $\mu := v_r + \frac{\mu}{\beta} , \beta = \theta$ then we see that a consistent approximation was derived by \citeauthor{10.2307/1427567} in section 4 of \cite{10.2307/1427567}. It says that in the small noise limit $\varepsilon\to 0$
\begin{equation}\label{gior}
\mathbb{E}(T^\varepsilon) \approx \frac{\sqrt[]{\varepsilon\pi}}{\beta^{\frac{3}{2}}(U-\frac{\mu}{\beta})}e^{\frac{\beta}{\varepsilon}(U - \frac{\mu}{\beta})^2}
\end{equation}
\end{remark}
\begin{remark}
Statements about the asymptotic distribution of $T^\varepsilon$ have been made by \textcite{day1983exponential} where $\frac{T^\varepsilon}{\mathbb{E}(t)}\sim \mathrm{Exp}(1)$ is determined. Therefore $\mathbb{P}(T^\varepsilon \leq t) = \int_0^t \frac{1}{\mathbb{E}(T^\varepsilon)}\exp(-\frac{1}{\mathbb{E}(T^\varepsilon)}x)\mathrm{d}x$ and asymptotically $T^\varepsilon \sim \mathrm{Exp}(e^{-\frac{\theta}{\varepsilon}(z - \mu)})$.
\end{remark}

\section{Alternative derivation via Eyring–Kramers law}
In the case where $\mathbf{b}(V_{\mathrm{m}}(t))$ assumes the form of a negative gradient, i.e. 
\begin{equation}
	\mathbf{b}(X) = - \nabla F(X)
\end{equation}
with $F$ sufficiently smooth and with a finite amount of minima, there is an alternative method to derive a refined estimate for the expected value of the first exit time called Eyring-Kramers law. This law was known as early as \citeyear{eyring1935activated} by \textcite{eyring1935activated} or slightly later by \textcite{kramers1940brownian} but it was only proven rigorously recently by \textcite{bovier2004metastability}. For the one-dimensional case with the above gradient condition it can be stated as
\begin{theorem}\label{eyring}
For $D = ]x , z[ \subset \mathbb{R}, |x-\mu|\geq|z-\mu|$ the expected exit time can be estimated as
\begin{equation}
\mathbb{E}(T_D^\varepsilon) \approx \frac{2\pi}{\sqrt{F''(v^*)|F''(w^*)|}} e^{\frac{2(F(w^*) - F(v^*))}{\varepsilon}}
\end{equation}
where $w^* = \arg\min_{z\in \delta D}F(z)$ and $v^*$ is the unique attractor in $D$ of the deterministic system.
\end{theorem}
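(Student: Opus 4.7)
The plan is to convert the problem into a deterministic boundary value problem and then use Laplace's method in a small-$\varepsilon$ asymptotic. The function $h(x_0):=\mathbb{E}^{x_0}[T_D^{\varepsilon}]$ is well known to satisfy a Dirichlet problem for the infinitesimal generator of the diffusion:
\begin{equation}
\tfrac{\varepsilon}{2}h''(x)-F'(x)h'(x)=-1 \text{ on }D, \qquad h\equiv 0 \text{ on }\partial D.
\end{equation}
Because the drift is a gradient, this operator admits an integrating factor: multiplying through by $e^{-2F(x)/\varepsilon}$ turns the left side into $\tfrac{\varepsilon}{2}(e^{-2F/\varepsilon}h')'$, so the equation reduces to the self-adjoint form $(e^{-2F/\varepsilon}h')'=-\tfrac{2}{\varepsilon}e^{-2F/\varepsilon}$. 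This is the one genuinely gradient-specific step; once we have it, two integrations together with $h(\partial D)=0$ give the explicit representation
\begin{equation}
h(x_0)=\frac{2}{\varepsilon}\int_{x_0}^{w^*}\!\!dy\,e^{2F(y)/\varepsilon}\int_{\text{well}}\!\!dz\,e^{-2F(z)/\varepsilon},
\end{equation}
after dropping the boundary-flux term on the other (higher) side of $\partial D$, which is exponentially suppressed in $\varepsilon$ by the hypothesis $|x-\mu|\ge |z-\mu|$ (i.e., $w^*$ is the lower barrier).

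The second step is Laplace's method applied to each integral separately. The inner integrand $e^{-2F(z)/\varepsilon}$ concentrates at the unique attractor $v^*\in D$, where $F$ attains its minimum; Taylor expansion $F(z)\approx F(v^*)+\tfrac{1}{2}F''(v^*)(z-v^*)^2$ gives a full Gaussian contribution $\sqrt{\pi\varepsilon/F''(v^*)}\,e^{-2F(v^*)/\varepsilon}$. The outer integrand $e^{2F(y)/\varepsilon}$ concentrates at $w^*$, where $F$ attains its maximum on the relevant side; expanding $F(y)\approx F(w^*)-\tfrac{1}{2}|F''(w^*)|(y-w^*)^2$ and using that $w^*$ is an interior critical point of $F$ (so the contribution is a full, not half, Gaussian) yields $\sqrt{\pi\varepsilon/|F''(w^*)|}\,e^{2F(w^*)/\varepsilon}$. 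Multiplying these together with the $2/\varepsilon$ prefactor and cancelling the $\varepsilon$-dependence of the two Gaussian widths produces exactly the asserted expression
\begin{equation}
h(x_0)\;\sim\;\frac{2\pi}{\sqrt{F''(v^*)\,|F''(w^*)|}}\,\exp\!\Bigl(\tfrac{2(F(w^*)-F(v^*))}{\varepsilon}\Bigr).
\end{equation}

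A short consistency check with Theorem~\ref{kramer_est} is worth doing at the end: for the OU process, $F(v)=\tfrac{\theta}{2}(v-\mu)^2$ gives $2(F(w^*)-F(v^*))=\theta(z-\mu)^2$, which recovers the exponent derived directly from Freidlin–Wentzell, so this proposal is compatible with the rougher LDT estimate and simply sharpens the subexponential prefactor.

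The main obstacle is the Laplace-method step, and there are two delicate points worth flagging. First, the representation for $h(x_0)$ must handle two-sided exit correctly: one has to justify that the opposite boundary contributes a term exponentially smaller by a factor $e^{-2(F(x)-F(w^*))/\varepsilon}$, which relies on the hypothesis $|x-\mu|\geq|z-\mu|$ and the implicit assumption that $\partial D$ consists of (non-degenerate) local maxima of $F$ so that $|F''(w^*)|>0$. Second, making Laplace's method rigorous requires error bounds on the tails and on the quadratic approximation of $F$ near $v^*$ and $w^*$; this is where the smoothness and non-degeneracy hypotheses on $F$ really get used, and a fully rigorous version (as in \textcite{bovier2004metastability}) requires potential-theoretic machinery that I would only sketch rather than carry out in detail here.
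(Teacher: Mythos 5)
Your proposal is correct as a heuristic derivation, but note that the paper does not prove this theorem at all: it states the Eyring--Kramers law and defers the proof entirely to \textcite{bovier2004metastability}, then merely applies it to $F_{LIF}$. Your route --- the Dirichlet problem $\tfrac{\varepsilon}{2}h''-F'h'=-1$, the integrating factor $e^{-2F/\varepsilon}$ made possible by the gradient structure, the exact double-integral representation, and Laplace's method at $v^*$ and $w^*$ --- is the classical one-dimensional argument, and the bookkeeping checks out: the $\varepsilon$'s from the two Gaussian widths cancel the $2/\varepsilon$ prefactor to give exactly $2\pi/\sqrt{F''(v^*)|F''(w^*)|}$, and the exponent specializes correctly to $\theta(z-\mu)^2/\varepsilon$ for the OU case, matching Theorem~\ref{kramer_est}. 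What your sketch buys over the paper is an actual derivation and, more importantly, an explicit identification of where the formula's hypotheses bite; what the cited potential-theoretic proof buys is rigorous control of the Laplace error terms and of the neglected boundary flux. Two small points: your parenthetical ``$w^*$ is the lower barrier'' should read ``the barrier of lower potential'' --- under $|x-\mu|\geq|z-\mu|$ one has $F(x)\geq F(z)$, so $w^*=z$ is the \emph{upper} endpoint of $]x,z[$; and your flagged assumption that $w^*$ is a non-degenerate interior critical point of $F$ (needed for the full rather than half Gaussian) in fact \emph{fails} in the paper's own application, since $F_{LIF}'(z)=\theta(z-\mu)\neq 0$ on $\delta D$, so the subexponential prefactor in \ref{eyring_est} is not literally justified by this theorem --- a limitation your analysis exposes and the paper glosses over.
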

Since for 
\begin{equation}
F_{LIF}(V_{\mathrm{m}}(t)) = \frac{\theta}{2}(V_{\mathrm{m}}(t)-\mu)^2
\end{equation} this relation is given for the Stochastic Leaky Integrate and Fire model we can verify the result from the previous section by this method. Note however that in general $\mathbf{b}$ will not be expressible as the gradient of a function $F$\footnote{Both, \textcite{fitzhugh1955mathematical}'s $\mathbf{b}(\colvec{2}{X}{Y}) = \colvec{2}{\frac{1}{\varepsilon}(X - X^3 - Y - s)}{\gamma X - Y + \beta}$ and \textcite{izhikevich2003simple}'s  $\mathbf{b}(\colvec{2}{X}{Y}) = \colvec{2}{0.04X^2 + 5X + 140 - Y + I}{a(bX - Y)}$, do not satisfy the necessary condition of integrability: $\frac{\delta \mathbf{b}_1
}{\delta Y} = \frac{\delta \mathbf{b}_2
}{\delta X} $.} so that the method that uses the rate function is more general.\\
Using theorem \ref{eyring} with $F_{LIF}(V_{\mathfrak{m}}(t))$ and $v^*$ as determined in \ref{attractor} we get
\begin{align}\label{eyring_est}
\theta = F_{LIF}''(V_{\mathrm{m}}(t)) & \quad
z =w^*= \arg\min_{w\in \delta D}F_{LIF}(w) \\
\mathbb{E}(T^\varepsilon_D) &\approx  \frac{2\pi }{\theta}e^{\frac{\theta}{\varepsilon}(z - \mu)^2}
\end{align}
where the second factor dominates for large values of $z, \mu$ and in particular for small values of $\varepsilon$.
\section{Possible application in parameter estimation}
The previous derivation can be useful in and of itself by giving a demonstration for how the expected first passage time (EFPT) can be derived directly from the surface form of the corresponding SDE. In this section I am exploring how the results might be used for estimating parameters of a model that incorporates the OU process and that derives further quantities of interest from it.

For example we can model a neuron by selecting a deterministic process $\mu(t)$ as the input, denoting the cell membrane as $V_\mathrm{m}(t)$ and the random spiking output per unit of time as $\lambda$. Then we assume that the cell membrane $V_\mathrm{m}(t)$ can be represented as an OU process satisfying
\begin{equation}\label{simpleformulation}
{\mathrm{d}V_{{\mathrm  {m}}}(t)} =-\theta\left(V_{{\mathrm  {m}}}(t) - \mu(t)\right)\mathrm{d}t + \varepsilon \mathrm{d}W_t  \quad V_m(0) = v_1
\end{equation}
with a constant threshold $V_\mathrm{th}$.

To use the results from the previous section we need to assume that the input is constant $\mu(t)=\mu$, so for now we assume that our modeled neuron is isolated and receives no input from other cells\footnote{This does not mean that $\mu$ is equal to zero since even in isolation there are ions moving through the cell membrane through active and passive channels.}. In this situation the parameters of the model can be identified for example by maximum-likelihood estimation from single-unit recordings as presented in \textcite{LANSKY1983247} to obtain $\hat{V}_\mathrm{th}$, $\hat{\theta},  \hat{\mu}$ and $\hat{\epsilon}$. The result from the previous section then says that 
\begin{equation}
\mathbb{E}(T^\varepsilon) \approx e^{\frac{\hat{\theta}}{\hat{\varepsilon}}(\hat{V}_\mathrm{th} - \hat{\mu})^2}
\end{equation}
and that it is reasonable to assume $T^\varepsilon \sim \mathrm{Exp}(\mathbb{E}(T^\varepsilon)^{-1})$. We can use the well-known fact\footnote{See for example \textcite{cooper2005poisson}.} that if the distribution of inter-spiking intervals $T^\varepsilon$ is exponential then the distribution of the numbers of spikes $\lambda\in\mathbb{N}_0$ in one unit of time $t$ is Poisson and the two are connected by 
\begin{equation}
\mathbb{E}(\lambda) = \mathbb{E}(T^\varepsilon)^{-1}
\end{equation}
and therefore $\lambda \sim \mathrm{Poiss}(e^{\frac{\theta}{\varepsilon}(V_\mathrm{th} - \mu)^2})$. This shows that in the small-noise limit it can be a reasonable simplification to model a spike trail as a Poisson process with an appropriately chosen rate parameter. This connection has also been noted by \textcite{stevens1996integrate} who derive a different but also exponential expression for the mean. Furthermore, if we instead assume that $\mu(t)$ is no longer constant (but still smaller than $V_\mathrm{th}$), it would be \textit{a priori} very hard to say how this influences the spiking output $\lambda(t)$. But it might be a promising approach to assume that the derived expression for the EFPT does not critically depend on a \textit{constant} input and that for a varying input $\mu(t)$ we can estimate the EFPT as $\mathbb{E}(\lambda) \approx e^{\frac{\hat{\theta}}{\hat{\varepsilon}}(\hat{V}_\mathrm{th} - \mu(t))^2}$.

We can sample a spike trail with respect to $\mu(t)$ by successively considering small discrete time steps $\Delta \mathrm{t}$ and by generating random uniform numbers $U \in [0,1]$ for each time step $\Delta \mathrm{t}$. In each time step we say that a spike occurred if $U < \Delta \mathrm{t} \times e^{\frac{\hat{\theta}}{\hat{\varepsilon}}(\hat{V}_\mathrm{th} - \mu(t))^2}$. Details on this method can be found in the first chapter of the book by \textcite{dayan2001theoretical}. This is a big computational simplification since it allows us to leave out the membrane potential and instead directly go from input to output. I address the question of how valid this approximation is in the next chapter.

Another interesting question is how much information about the input $\mu$ can be retrieved from knowing the expected length of ISIs. This is a very difficult question\footnote{See for example \textcite{wei2004signal}  or \textcite{aihara2002possible} for more sophisticated methods.} and I will only suggest a possible estimator arising from \ref{kramer_est} without analyzing its properties or its biological plausibility further. If we assume that $\theta, \varepsilon$ and $V_\mathrm{th}$ are constants of the neuron\footnote{At least for the diffusion coefficient $\varepsilon$ this is a very strong assumption.} then we can use the \textit{method of moments} to derive
\begin{equation}\label{memo}
\mu = \sqrt[]{\log(\mathbb{E}(T^\varepsilon))\frac{\varepsilon}{\theta}} + V_\mathrm{th}
\end{equation}

Thus if we have neuronal spiking data we can determine the inter-spiking intervals (ISIs) and then derive the EFPT directly from the data by averaging over the length of the ISIs. By using the maximum-likelihood estimates of the neuronal constants $\hat{\varepsilon},\hat{\theta}$ and $\hat{V}_\mathrm{th}$ we can then use \ref{memo} to derive an estimate for $\mu$. 

\chapter{Monte-Carlo Sampling for First-Exit Times} 

\label{Chapter6} 


The problem of estimating the First-passage time (FPT) from a SDE has some inherent intricacies arising from the stochastic nature of the process \textcite{giraudo1999improved}. The naive approach is to sample a large number of paths from the SDE with a discretization scheme and to determine the first point in time where the simulated discretized trajectories cross a fixed threshold. However, no matter how small the discretization in the approximation scheme is chosen, the fluctuations of the process in between two simulated points can be arbitrarily large and thus bias this naive estimator. These effects become even more pronounced if the parameters of the SDE take on very small or very large values as they do in the case of the OU-model of the membrane potential. A possible method to avoid this bias is presented in \textcite{Giraudo2001} or \textcite{drugowitsch2016fast} but there is (to my knowledge) no satisfying implementation available online. Implementing such an algorithm is beyond the scope of this work so I used \texttt{R} to implement the naive method.
\section{Results}
\begin{figure}
\centering
\includegraphics[scale=0.3]{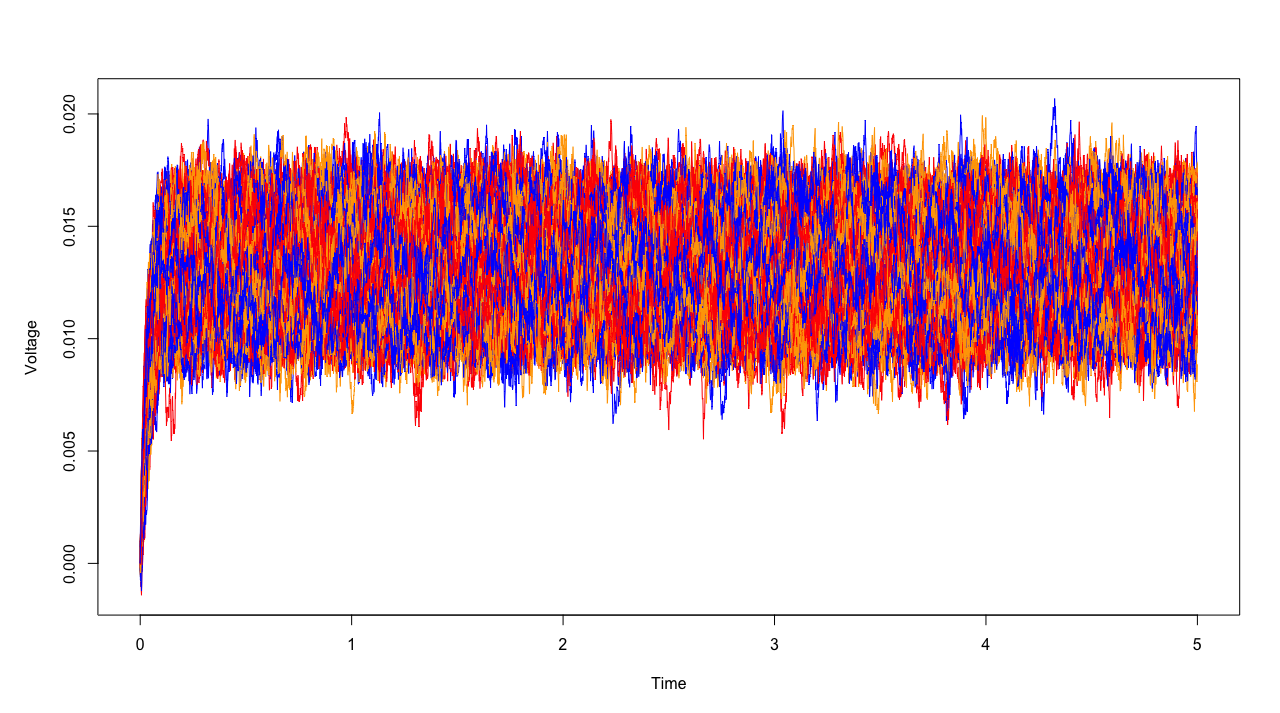}
\caption{500 trajectories of \ref{simpleformulation} simulated with the third-order Runge-Kutta scheme with $\Delta t = 0.0625$ parametrized according to \textcite{lansky2006parameters}.}
\label{500trajfig}
\end{figure}
For the simulation with \texttt{R} (Appendix \ref{AppendixA}) I used the \texttt{Sim.DiffProc} package by \textcite{simdiffproc} which implements several relatively robust functions for the simulation of random diffusion processes and the determination of the first-passage time through a constant boundary. The discretization scheme used for the simulation is the third-order Runge-Kutta scheme presented by \textcite{tocino2002runge}. 500 sample trajectories generated by this method can be seen in \ref{500trajfig}.
\begin{figure}
\centering
\centerline{%
\includegraphics[scale=0.3]{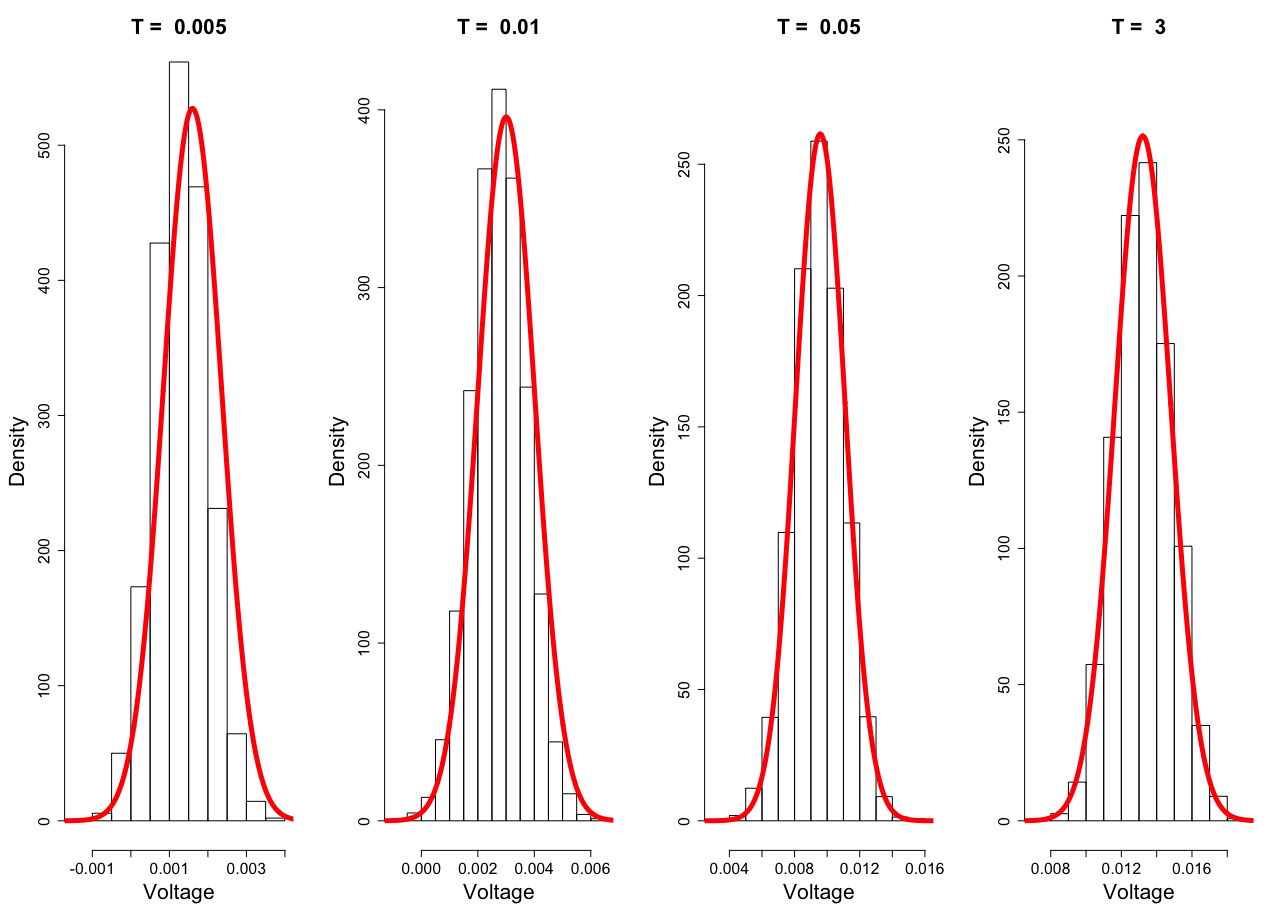}
\includegraphics[scale=0.3]{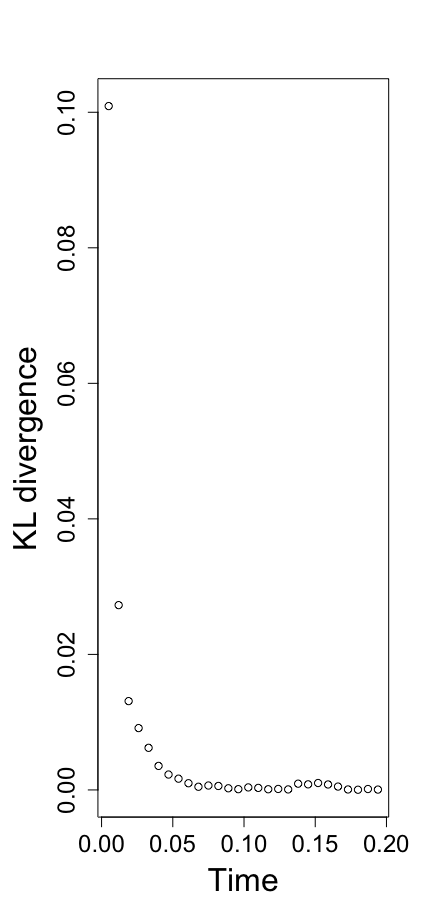}
}%
\caption{Left: Histogram of the distribution of the voltages at different time points t. The predicted transition densities as determined by \ref{transition} is plotted in red. Right: Estimated Kullback-Leibler Divergence of the simulated transition distribution to the theoretical distribution, according to the KL formula for two normal distributions $\mathcal{N}_1,\mathcal{N}_2: \mathrm{KL}(\mathcal{N}_1||\mathcal{N}_2)=\log(\frac{\sigma_2}{\sigma_1}) + \frac{\sigma_1^2 + (\mu_1-\mu_2)^2}{2\sigma_2^2} - \frac{1}{2}$}
\label{figTransHist}
\end{figure}
In \ref{figTransHist} the histogram plot of the membrane potential at ascending points in time is displayed next to a plot of the KL-divergence to the theoretically predicted normal distribution at different points in time. We see that close to $t=0$ the simulation does not match the predictions perfectly but already at $t = 0.05$ the fit is very good. This is probably less a fact about the accuracy of the expression derived in \ref{transition} and more about the accuracy of the \texttt{Sim.DiffProc} package.

\begin{figure}
\centering
\centerline{%
\includegraphics[scale=0.3]{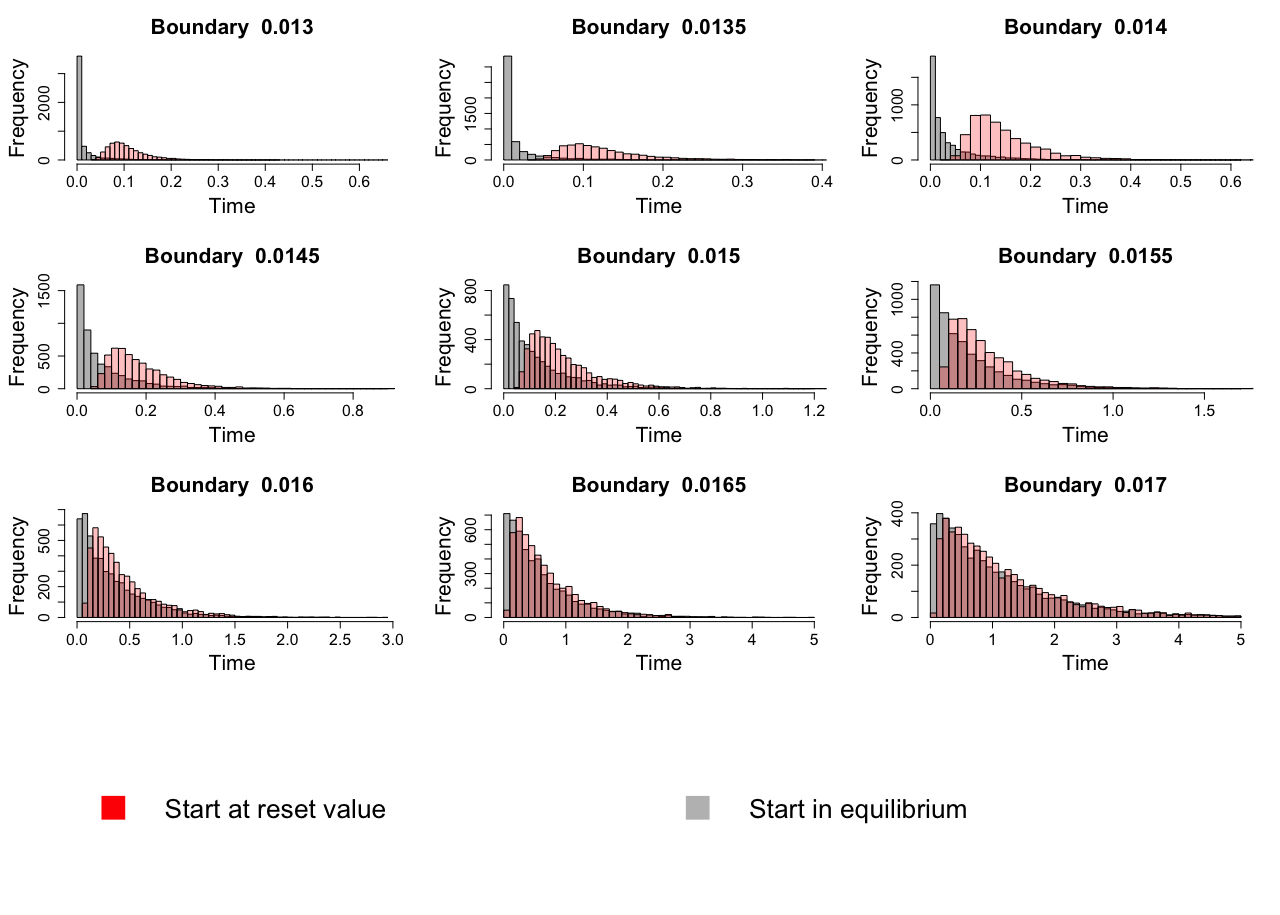}
\includegraphics[scale=0.3]{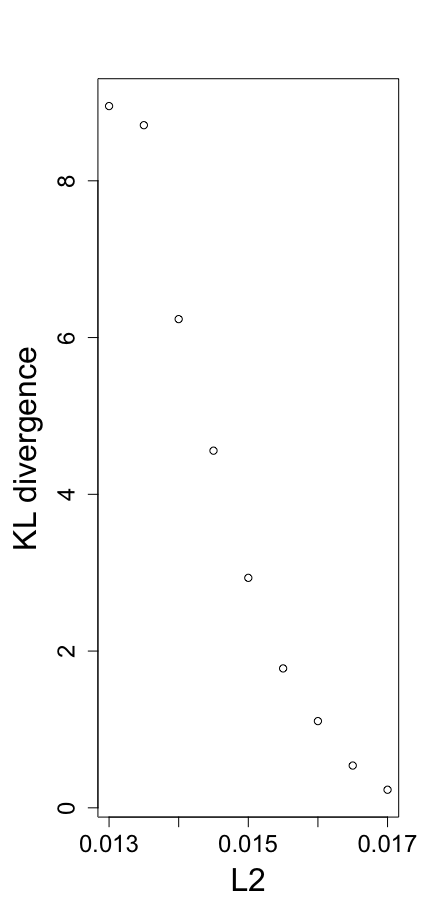}
}%
\caption{Left: Histograms of FPT of two different processes. The process characterized by the initial condition $V_\mathrm{m}(0)=\mu$ is displayed in grey and the process with $V_\mathrm{m}(0)=0$ is displayed in red. Right: Estimated KL divergence between the two samples calculated with the kNN algorithm proposed by \textcite{boltz2007knn} implemented in the \texttt{FNN} package by \textcite{beygelzimer2015package}.}
\label{histogramFPT}
\end{figure}

To determine estimates for the first-passage time I increased the sample size to 5000 with $\Delta t = 0.001$ to diminish the above-mentioned problems arising from fluctuations in between discretization steps. I also considered two different initial conditions of \ref{simpleformulation}: $V_\mathrm{m}(0)=0$, the reset value of the potential after the generation of an action potential, and $V_\mathrm{m}(0)=\mu$, the equilibrium potential. The resulting FPTs are displayed as histograms in \ref{histogramFPT}. For small values of the constant boundary the two conditions differ, but for higher values the dependence on the initial value decreases as the two distributions approach one another.


\begin{figure}
\centering
\centerline{%
\includegraphics[scale=0.35]{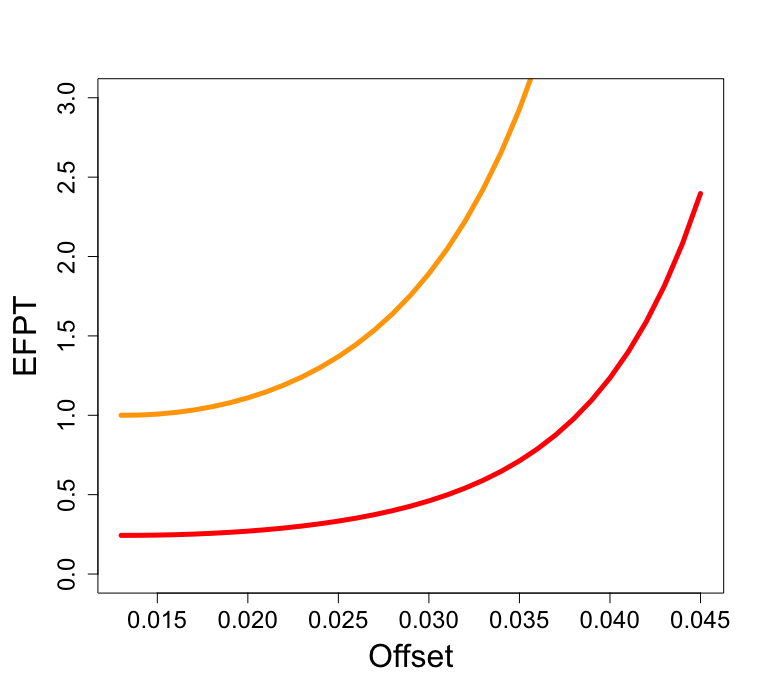}
\includegraphics[scale=0.35]{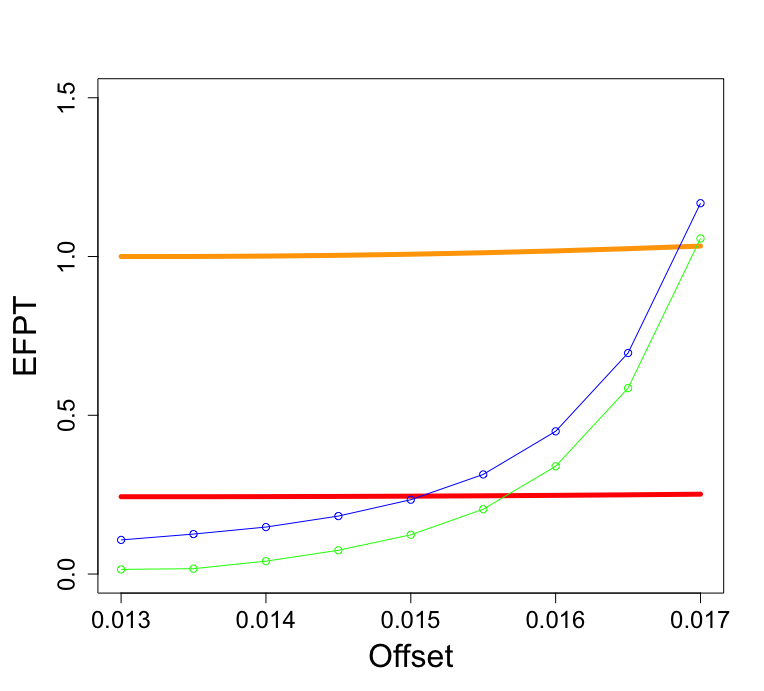}
}%
\caption{Left: The Eyring-Kramer estimate \ref{eyring_est} predicts these expected first-passage times for a given offset from the initial depolarization. Right: Expected first-passage time vs. offset from the initial depolarization $v_0$ after a spike. Naively simulated (green: start at $V_\mathrm{m}(0)=0$, blue: start at $V_\mathrm{m}(0)=\mu$), predicted by Kramers' law \ref{kramer_est} (yellow) and predicted by the Eyring-Kramer theorem \ref{eyring_est} (red).}
\label{figEyring}
\end{figure}
The expression derived in \ref{eyring_est} predicts that if we increase $L_2 > \mu$, 
we will observe an exponential increase in the EFPT of the process. The prediction for fixed values of the other parameters is plotted in \ref{figEyring}. In the same plot the theoretical predictions for this range of offsets is also plotted. From the figure we see that the exponential growth of the simulated EFPT sets in much earlier than the theoretical estimation would predict. To examine where the discrepancy stems from I further analyze the shape of the estimated EFPTs.
\begin{figure}
\centering
\centerline{%
\includegraphics[scale=0.21]{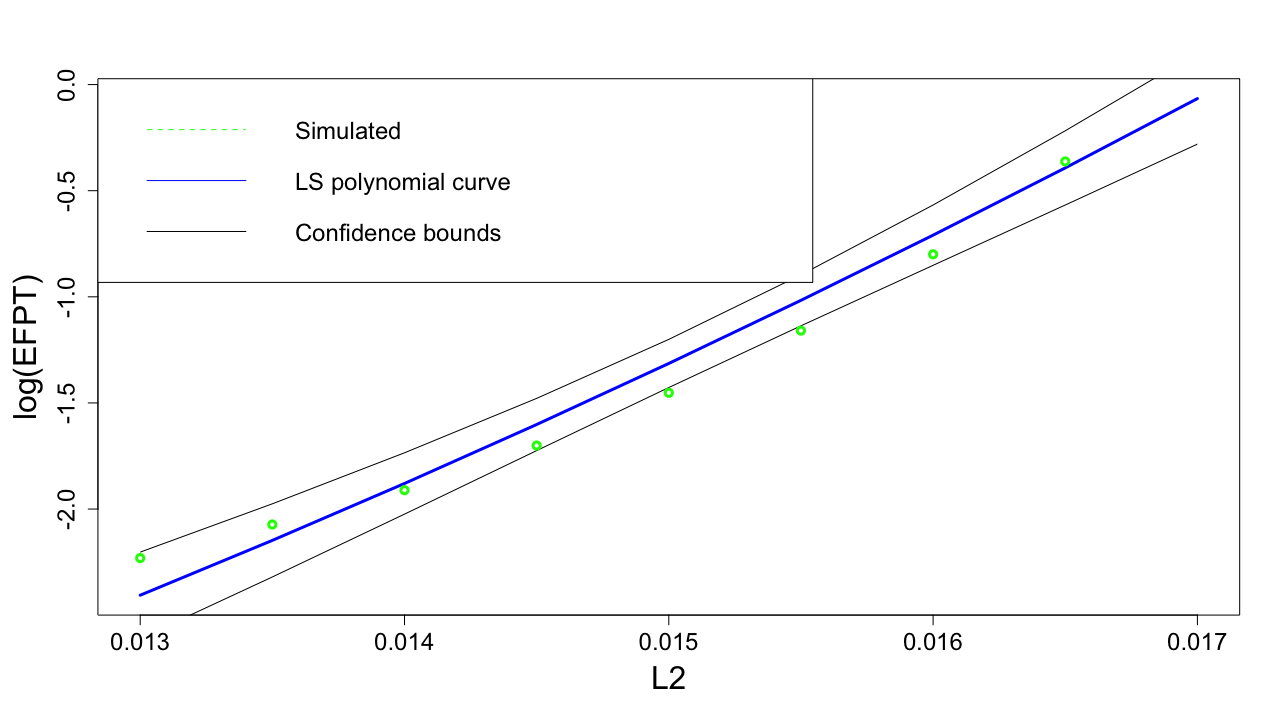}
\includegraphics[scale=0.21]{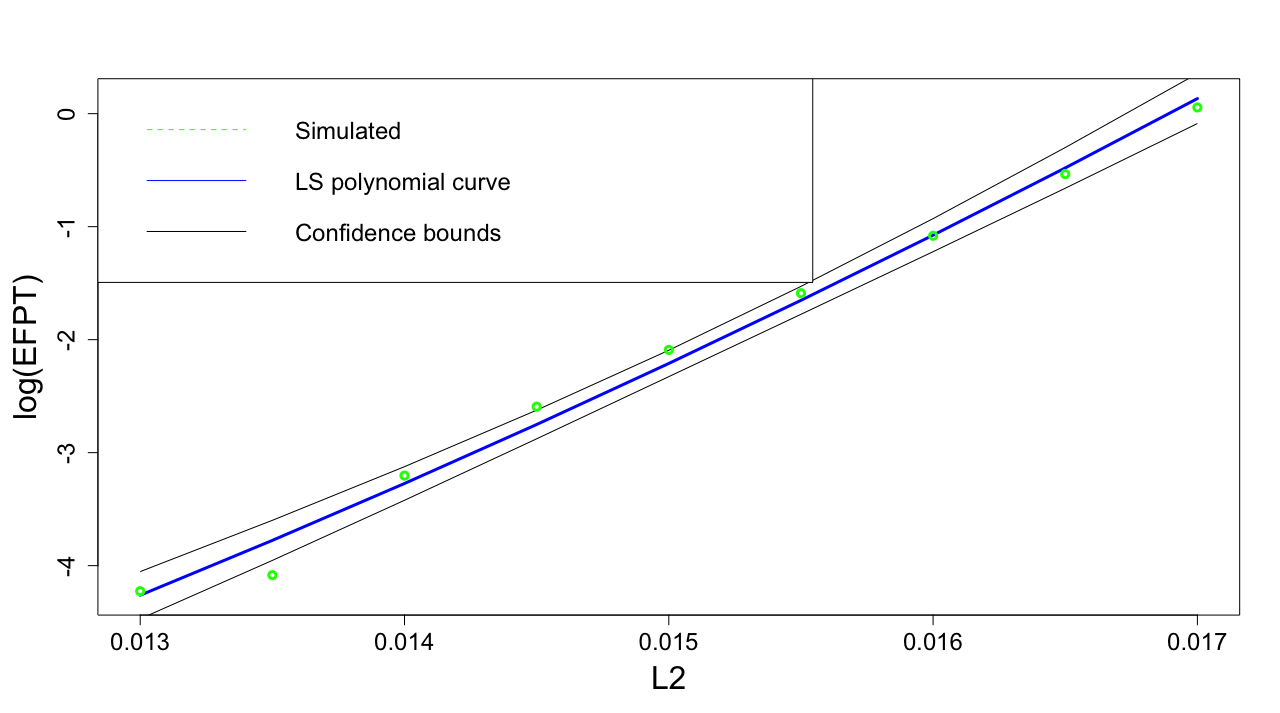}
}%
\caption{Fitting a quadratic model $p(x) = c + ax^2$ with \texttt{R} to the logarithm of the simulated expected means produced the following 95\% confidence intervals: Initial condition at reset potential (Left) - $I_c = [-6.38 ,   -5.03]$ and $ I_a = [16575.83, 22422.35]$
, Initial condition at equilibrium (Right) - $I_c=[-11.15  ,  -9.76]$ and
 $I_a =[33616.67, 39634.71]$}
\label{polyfit}
\end{figure}
The resulting curve looks like an exponential and so I fitted a quadratic polynomial model without a linear term to the logarithm of the simulated means which produces a satisfying fit, see \ref{polyfit}. For the system which is initialized in equilibrium the fit is even better than for the one initialized at reset potential. The determined coefficient of the quadratic term is one order of magnitude smaller than the predicted value $\frac{\theta}{\varepsilon} \approx 2263.526$. 

\chapter{Conclusion} 

\label{Chapter7} 

In this bachelor thesis I have applied large deviation theory to a simple stochastic neuron model to reproduce two classical results. Using the Freidlin-Wentzell theorem \ref{fwt} I was able to derive the transition probability and the stationary distribution of an Ornstein-Uhlenbeck process directly from the form of its characterizing stochastic differential equation. By extending this result with Kramers' law I derived a qualitative estimate for the expected first passage time. I also provide an additional derivation for this estimate by exploiting the fact that the drift term can be written as the gradient of a scalar field. Finally I provided a few possible applications of the derived results in parameter estimation: On the one hand for deriving a Poisson approximation for the FPT $T^\varepsilon$ given information about the input $\mu$ and also an estimate for the input $\mu$ given the EFPT $\mathbb{E}(T^\varepsilon)$. Simulating the process revealed that while the transition density was derived very accurately, the expression for the expected first-passage time is numerically not very exact. I have also attempted to provide a number of different perspectives on the theorems in large-deviation in general. These perspectives are far from being comprehensive and providing a full neuroscientific interpretation of large-deviation remains an open problem. Although the application of the theorems is not overly complicated, their proofs include a number of involved arguments and a full understanding of them appears to be necessary for a neuroscientific interpretation.

There is a lot of possible further work after this initial foray. The effects biasing the Kramers estimation must be investigated further, possibly following the approach by \textcite{PhysRevE.48.931} who developed a method to determine the subexponential factors missing in a Kramers type estimate. Besides only estimating the expected value of the FPT, an estimate of other invariants like the variance and further higher moments would be very useful. It might even be possible to determine the moments from the complete asymptotic distribution of $T^\varepsilon$, probably using results akin to the asymptotic results from \textcite{day1983exponential}. The simulations indicate that the asymptotic approximation through an exponential distribution is valid if the process is started in equilibrium and that it fails if the process is started at reset potential and the threshold is relatively low. Possibly the FPT $T^\varepsilon$ should be described as the sum of the exponential Kramers estimate starting in equilibrium plus an additional random variable $T_\mathrm{res}^\mathrm{equ}$ that describes the time required to transition from reset to equilibrium.

Besides extending the framework it would also be of interest how it can be integrated into the general Bayesian framework. The aforementioned connections to the KL divergence and to the Gibbs measure might be central to creating such a connection, see Sanov's theorem and the work by \textcite{fischerlarge}. Unifying the two approaches would surely be beneficial for both fields.

Last and possibly most important is to get a better understanding of the biological interpretation of large deviation results. In statistical mechanics the rate function is interpreted as a quasi-potential of a non-conservative field, see \textcite{richard1995overview}, and this interpretation might translate into the current situation. It is then interesting to look at possible biological mechanisms representing and \textit{minimizing} the quasi-potential. To achieve this it might be instrumental to apply the presented theorems to further models of neuronal dynamics, see for example the models by \textcite{hodgkin1952quantitative} or \textcite{izhikevich2003simple}. 


\appendix 



\begin{multicols}{2}
\printbibliography[heading=bibintoc]
\end{multicols}


\begin{declaration}
\addchaptertocentry{\authorshipname}

\begin{itemize}
\item[] I hereby certify that the work presented here is, to the best of my knowledge and belief, original and the result of my own investigations, except as acknowledged, and has not been submitted, either in part or whole, for a degree at this or any other university.
\end{itemize} 
\hfill\\
\noindent Signed:\\
\rule[0.5em]{25em}{0.5pt} 
 
\noindent Date:\\
\rule[0.5em]{25em}{0.5pt} 
\end{declaration}
\cleardoublepage

\end{document}